\newif\ifarxiv
\newcommand{\TITEL}{The field of the Reals and the Random Graph are
  not Finite-Word Ordinal-Automatic}
\newcommand{\PERSON}{
\author{Alexander Kartzow}
\address{Institut f\"ur Informatik, Universit\"at Leipzig, Germany\\
  Department f\"ur Elektrotechnik und Informatik, Universit\"at
  Siegen, Germany}
\email{kartzow@informatik.uni-siegen.de}
}
\newcommand{\ABSTRACT}{ Recently, Schlicht and Stephan lifted the
  notion of automatic-structures to the notion of (finite-word)
  ordinal-automatic structures. These are structures whose domain and
  relations can be represented by automata reading finite words whose
  shape is some fixed ordinal $\alpha$.  We lift \Delhomme's
  relative-growth-technique from the automatic and tree-automatic
  setting to the ordinal-automatic setting. This result implies that
  the random graph is not ordinal-automatic and infinite integral
  domains are not ordinal-automatic with respect to ordinals below
  $\OmegaPlus$ where $\omega_1$ is the first uncountable ordinal.}
\newcommand{\KEYWORDS}{Ordinal-automatic structures, automatic
  integral domains, Rado graph, growth rates}
\title{\TITEL
\ifarxiv \else \thanks{This work is supported by the  DFG
    research project GELO.} \fi}
\abstract{\ABSTRACT}
\keywords{\KEYWORDS} 
\newcommand{\qedhere}{}
\newtheorem{theorem}{Theorem}
\newtheorem{lemma}[theorem]{Lemma}
\newtheorem{corollary}[theorem]{Corollary}
\newtheorem{proposition}[theorem]{Proposition}
\theoremstyle{definition}\newtheorem{definition}[theorem]{Definition}
\theoremstyle{definition}\newtheorem{example}[theorem]{Example}
\theoremstyle{definition}\newtheorem{remark}[theorem]{Remark}
\newtheorem{definition}[theorem]{Definition}
\newtheorem{remark}[theorem]{Remark}
\newcommand*{\automatic}[1]{\ensuremath{(#1)}\mbox{-automatic}}
\newcommand*{\OmegaPlus}{\ensuremath{{\omega_1+\omega^\omega}}}
\newcommand{\N}{\mathbb{N}}
\newcommand{\Struc}[1]{\mathfrak{#1}}
\newcommand{\Aut}[1]{\mathcal{#1}}
\newcommand{\Fam}[1]{\mathcal{#1}}
\newcommand*{\supp}[0]{\mathsf{supp}} 
\newcommand*{\FinWords}[2]{{\ensuremath{{#1}^{(#2)}}}} 
\newcommand*{\run}[2]{\xrightarrow[{#2}]{#1}}
\newcommand*{\Buchi}{B\"uchi\xspace}
\newcommand*{\Todorcevic}{Todor\v{c}evi\'{c}\xspace} 
\renewcommand{\phi}{\varphi}
\newcommand{\Delhomme}{Delhomm\'e\xspace}
\LetLtxMacro{\Standardexists}{\exists}    
\newcommand{\existswrapper}[1][]{\ifthenelse{\isempty{#1}}{}{#1\,}}
\renewcommand*{\exists}[1]{\Standardexists \existswrapper[#1]}
\LetLtxMacro{\Standardforall}{\forall}
\renewcommand*{\forall}[1]{\Standardforall \existswrapper[#1]}
\begin{document}

\ifarxiv
\title{\TITEL}

\PERSON

\keywords{\KEYWORDS}
\subjclass[2012]{[{\bf Theory of computation}]: 
  Formal languages and automata theory---Automata over infinite objects}  
\titlecomment{This work is supported by the  DFG
    research project GELO}

\begin{abstract}
\ABSTRACT
\end{abstract}

\fi

\maketitle

\section{Introduction}

Finite automata  play a crucial
role in many areas of computer science. In particular, finite automata
have been used to represent certain infinite
structures. The basic notion of this branch of research is the class of
automatic structures (cf. \cite{KhoussainovN94}). A
structure is automatic if its domain as well 
as its relations are recognised by (synchronous multi-tape) finite
automata processing finite words. This class has the remarkable
property that the first-order theory of any automatic structure is
decidable. One goal in the theory of automatic structures is a
classification of those structures that are automatic
(cf.~\cite{Delhomme04,KhoussainovRS05,KhoussainovNRS07,DBLP:journals/apal/KhoussainovM09,KuLiLo11}). 
Besides finite automata reading finite or infinite (i.e.,
$\omega$-shaped) words
there are also finite automata reading finite or infinite
\emph{trees}. Using such automata as representation of 
structures leads to the notion of tree-automatic structures
\cite{Blumensath1999}. The
classification of tree-automatic structures is less advanced but some
results have been obtained in the last years 
(cf.~\cite{Delhomme04,Huschenbett13,KaLiLo12}).  
Schlicht and Stephan \cite{SchlichtS13} and Finkel and \Todorcevic
\cite{FinkelT11}
have started research on a new branch of automatic
structures based on automata processing $\alpha$-words where $\alpha$
is some ordinal. An $\alpha$-word is a map $w\in\Sigma^\alpha$ for some
finite alphabet $\Sigma$. We call $w$ a finite $\alpha$-word if there
is one symbol $\diamond$ such that $w(\beta)=\diamond$ for all but
finitely many ordinals $\beta < \alpha$. We call the  structures
represented by finite-word $\alpha$-automatic structures
\automatic{\alpha}. Many of the fundamental results on
automatic structures have analogues in the setting of
\automatic{\alpha} structures.
\begin{itemize}
\item The first-order theory of every \automatic{\alpha} structure is
  decidable and the class of \automatic{\alpha} structures is closed
  under expansions by first-order definable relations for all 
  $\alpha < \OmegaPlus$ 
  \cite{HuKaSchl14}.
\item Lifting a result of Blumensath \cite{Blumensath1999} from the
  word- and tree-automatic setting, 
  there is an \automatic{\alpha} structure which is complete
  for the class of \automatic{\alpha} structures under first-order
  interpretations \cite{HuKaSchl14}.
\item The sum-of-box-augmentation technique of \Delhomme
  \cite{Delhomme04} for tree-automatic structures has an analogue for
  ordinal-automatic structures 
  which allows to classify all \automatic{\alpha} ordinals 
  \cite{SchlichtS13} and give
  sharp bounds on the ranks of \automatic{\alpha} scattered linear
  orderings \cite{SchlichtS13}
  and well-founded order trees \cite{KartzowS13}. 
\item The word-automatic Boolean
  algebras \cite{KhoussainovNRS07} and  the \automatic{\omega^n}
  Boolean algebras \cite{HuKaSchl14}
  have been classified. In contrast, a
  classification of the tree-automatic Boolean algebras is still open.
\end{itemize}
In summary one can say that all known techniques which allow to prove
that a structure is not tree-automatic have known counterparts for
ordinal-automaticity. The only  exception to this rule has been \Delhomme's growth-rate-technique
\cite{Delhomme04}. We close this gap by showing that the maximal growth
rates of ordinal automatic structures also has a polynomial bound.  This
allows to show that the Rado graph is not
\automatic{\alpha}. In fact, we show that the bound on the maximal growth-rate of
\automatic{\alpha} structure that we provide is strictly smaller than
the bound for tree-automatic and strictly greater than the bound for
word-automatic structures. Exhibiting this fact, we provide a new
example of a structure that is \automatic{\omega^2} but not
word-automatic. This example also shows that our growth-rate bound for
\automatic{\alpha} structure is essentially optimal.

One of the long-standing open problems in the field of automatic
structures is the question whether the field of the reals 
$\Struc{R} = (\mathbb{R}, +, \cdot, 0, 1)$ has a presentation based on finite
automata. Due to cardinality reasons it is clear that this structure
is not word- or tree-automatic. Recently, 
Zaid et al.~\cite{ZaidGKP14} have shown that $\Struc{R}$ (as well as every 
infinite integral domain) is not infinite-word-automatic. This leaves
infinite-tree-automata as the last classical candidate that might
allow to represent $\Struc{R}$. Note that the cardinality argument
also shows that $\Struc{R}$ is not
\automatic{\alpha} for all countable $\alpha$ (because the set of finite
$\alpha$-words is countable). Nevertheless the set of finite
$\omega_1$-words is uncountable  whence $\Struc{R}$ may be a priori
\automatic{\alpha} for some uncountable ordinal $\alpha$. Using the
growth rate argument we can show that
no infinite
integral domain is \automatic{\alpha} for any ordinal $\alpha < \OmegaPlus$. 
Let us mention that it also remains open whether $\Struc{R}$ is
automatic with respect to automata that also accept infinite
$\alpha$-words for some $\alpha \geq \omega^2$.

\subsection{Outline of the Paper}
In the next section we recall the necessary definitions on
\automatic{\alpha} structures and the fundamental notions concerning
growth rates. In Section~\ref{sec:BasicResults} we recall basic results on
\automatic{\alpha} structures which are needed to obtain the growth
rate bound in Section~\ref{sec:GrowthRate}. Finally, Section~\ref{sec:Applications}
contains applications of the growth rate argument to the random graph,
integral domains and concludes with the construction of a new example
of an \automatic{\omega^2} structure which is not word-automatic
because its growth-rate exceeds the known bound for word-automatic
structures.

\section{Definitions}
\label{sec:defs}

\subsection{Ordinals}

We identify an ordinal $\alpha$ with the set of smaller
ordinals \mbox{$\{\beta \mid \beta < \alpha\}$}. We say $\alpha$ has
\emph{countable cofinality} if $\alpha=0$ or there is a  sequence
$(\alpha_i)_{i\in\omega}$ of ordinals such that 
$\alpha = \sup \{\alpha_i+1\mid i\in\omega\}$.  Otherwise we say $\alpha$ has \emph{uncountable cofinality}.
We denote the first uncountable ordinal by $\omega_1$. Note that it is
the first ordinal with uncountable cofinality.

For every ordinal $\alpha$ and every $n\in \N$, let $\alpha_{\sim n}$
be the ordinal of the form $\alpha_{\sim n} = \omega^{n+1}\beta$ for
some ordinal $\beta$ such that
\begin{equation*}
  \alpha = \alpha_{\sim n}+\omega^n m_n + \omega^{n-1} m_{n-1}+
  \dots + m_0  
\end{equation*}
for some natural numbers $m_0, \dots, m_n$.

\subsection{Ordinal-Shaped Words}

\emph{First of all, we agree on the following convention:} In this 
article, every alphabet $\Sigma$ contains a distinguished
\emph{blank symbol} which is denoted by $\diamond_\Sigma$ or, if the
alphabet is clear from the context, just by $\diamond$. Moreover, for
alphabets $\Sigma_1,\dotsc,\Sigma_r$, the distinguished symbol of the
alphabet $\Sigma_1 \times \dotsb \times \Sigma_r$ will always be
$\diamond_{\Sigma_1 \times \dotsb \times \Sigma_r} =
(\diamond_{\Sigma_1},\dotsc,\diamond_{\Sigma_r})$. 

For some limit ordinal $\beta\leq\alpha$ and a map
\mbox{$w:\alpha+1\to A$} we introduce the
following notation for the \emph{set of images cofinal in
$\beta$}:
\begin{equation*}
  \lim_{\beta} w:= \{a\in A \mid \forall{\beta'<\beta}
  \exists{\beta' < \beta'' < \beta}  w(\beta'') = a\}.
\end{equation*}

\begin{definition}
  An \emph{$(\alpha)$-word (over $\Sigma$)} (called a finite
  $\alpha$ word over $\Sigma$) is a map 
  $w\colon \alpha \to \Sigma$
  whose \emph{support}, i.e., the set
  \begin{equation*}
  	\supp(w) = \Set{\beta \in \alpha | w(\beta)\neq\diamond },
  \end{equation*}
  is finite.
  The set of all $(\alpha)$-words over $\Sigma$ is denoted by $\FinWords{\Sigma}{\alpha}$.
  We write $\diamond^\alpha$ for the constantly $\diamond$ valued word
  $w:\alpha\to \Sigma$, $w(\beta)=\diamond$ for all
  $\beta<\alpha$. 
\end{definition}

\begin{definition}
  If $\gamma \leq \delta \leq \alpha$ are ordinals 
  and $w:\alpha\to \Sigma$ some
  $(\alpha)$-word, we denote by
  $w{\restriction}_{[\gamma,\delta)}$ the restriction of $w$ to the
  subword between position $\gamma$ (included) and $\delta$
  (excluded). 
\end{definition}

\subsection{Automata and Automatic Structures}
\Buchi \cite{Buchi65} has already introduced automata
that process $(\alpha)$-words. These behave like usual finite automata
at successor ordinals while at limit ordinals a limit
transition that resembles the acceptance condition of a
Muller-automaton is used.

\begin{definition}
  An \emph{ordinal automaton} is a tuple $(Q, \Sigma, I, F, \delta)$ where
  $Q$ is a finite set of states, $\Sigma$ a finite alphabet, $I\subseteq Q$
  the initial states, $F\subseteq Q$ the final states and 
  \begin{equation*}
    \delta \subseteq  (Q\times \Sigma \times Q) \cup (2^Q\times Q)    
  \end{equation*}
  is the   transition relation. 
\end{definition}

\begin{definition}
  A \emph{run} of $\Aut{A}$ on the $(\alpha)$-word $w \in \FinWords{\Sigma}{\alpha}$ is a map 
  $r: \alpha+1 \to Q$ such that
  \begin{itemize}
  \item $ \left( r(\beta), w(\beta), r(\beta+1)\right)\in \Delta$ for
    all $\beta<\alpha$
  \item $(\lim_{\beta} r, r(\beta))\in\Delta$
    for all limit ordinals $\beta\leq\alpha$. 
  \end{itemize}
  The run $r$ is \emph{accepting} if 
  $r(0)\in I$ and
  $r(\alpha)\in F$.
  For $q,q' \in Q$, we write $q \run{w}{\Aut{A}} q'$ if there is  a run
  $r$ of $\Aut{A}$ on $w$
  with $r(0) = q$ and $r(\alpha) = q'$.
\end{definition}

In the following, we always fix an ordinal $\alpha$ and then
concentrate on the set of $(\alpha)$-words that a given ordinal automaton
accepts. In order to stress this fact, we will call the ordinal-automaton
an $(\alpha)$-automaton. 

\begin{definition}
  Let $\alpha$ be some ordinal and $\Aut{A}$ be an $(\alpha)$-automaton. 
  The \emph{$(\alpha)$-language} of $\Aut{A}$, denoted by $L_{\FinWords{\Sigma}{\alpha}}(\Aut{A})$,
  consists of all
  $(\alpha)$-words $w \in \FinWords{\Sigma}{\alpha}$ which admit
  an accepting run of $\Aut{A}$ on $w$. 
  Whenever $\alpha$ is clear from the context, we may omit the subscript
  $\FinWords{\Sigma}{\alpha}$ and just write $L(\Aut{A})$ instead of $L_{\FinWords{\Sigma}{\alpha}}(\Aut{A})$.
\end{definition}

Automata on words (or infinite words or (infinite) trees) have
been applied fruitfully for representing structures. This can be
lifted to the setting of $(\alpha)$-words and leads to the notion of
\automatic{\alpha} structures.
In order to use $(\alpha)$-automata to recognise relations of $(\alpha)$-words,
we need to encode tuples of $(\alpha)$-words by one $(\alpha)$-word:

\begin{definition}
  Let $\Sigma$ be an alphabet and $r \in \N$.
  \begin{enumerate}[(1)]
  \item We regard any tuple
    $\bar{w} = (w_1,\dotsc,w_r) \in \bigl(\FinWords{\Sigma}{\alpha}\bigr)^r$
    of $(\alpha)$-words over some alphabet $\Sigma$ as an $(\alpha)$-word
    $\bar{w} \in \FinWords{(\Sigma^r)}{\alpha}$ over the alphabet
    $\Sigma^r$ by defining
    \begin{equation*}
      \bar{w}(\beta) = \bigl(w_1(\beta),\dotsc,w_r(\beta)\bigr)
    \end{equation*}
    for each $\beta < \alpha$.
  \item An \emph{$r$-dimensional $(\alpha)$-automaton over $\Sigma$}
    is an $(\alpha)$-automaton $\Aut{A}$ over $\Sigma^r$. The $r$-ary relation on
    $\FinWords{\Sigma}{\alpha}$ \emph{recognised} by $\Aut{A}$ is denoted
    \begin{equation*}
      R(\Aut{A}) = \Set{\bar w \in \bigl(\FinWords{\Sigma}{\alpha}\bigr)^r | \bar w \in L(\Aut{A}) } \,.
    \end{equation*}
  \end{enumerate}
\end{definition}
  Usually, this interpretation of $\bar{w}$ as an $(\alpha)$-word is
  called \emph{convolution} of $\bar{w}$ and denoted $\otimes \bar{w}$.
  For the sake of convenience, we just omit the symbol $\otimes$.

\begin{definition} 
  Let $\tau = \{R_1, R_2, \dots, R_m\}$ be a finite relational signature
  and let relation symbol $R_i$ be of arity $r_i$. 
  A structure $\Struc{A}=(A,R^\Struc{A}_1,R^\Struc{A}_2, \dots, R^\Struc{A}_m)$ is
  \emph{\automatic{\alpha}} if there are an alphabet $\Sigma$ and
  $(\alpha)$-automata
  $\Aut{A}, \Aut{A}_\approx,\Aut{A}_1, \dots, \Aut{A}_m$ such that
  \begin{itemize}
  \item
    $\Aut{A}$ is an $(\alpha)$-automaton over $\Sigma$,
  \item 
    for each $R_i\in \tau$, $\Aut{A}_i$ is an $r_i$-dimensional $(\alpha)$-automaton over $\Sigma$
    recognising an $r_i$-ary relation $R(\Aut{A}_i)$ on $L(\Aut{A})$,
  \item $\Aut{A}_\approx$ is a $2$-dimensional $(\alpha)$-automaton over $\Sigma$
    recognising a congruence relation $R(\Aut{A}_\approx)$ on
    the structure
    $\Struc{A}'= \left (L(\Aut{A}), L(\Aut{A}_1), \dotsc, L(\Aut{A}_m)\right)$, and
  \item  the quotient structure $\Struc{A}'/R(\Aut{A}_\approx)$ is isomorphic to $\Struc{A}$,
    i.e.,  $\Struc{A}'/R(\Aut{A}_\approx)\cong \Struc{A}$.
  \end{itemize}
  In this situation, we call the tuple $(\Aut{A}, \Aut{A}_\approx,\Aut{A}_1, \dots, \Aut{A}_m)$ an
  \emph{\automatic{\alpha} presentation} of $\Struc{A}$. 
  This presentation is said to be \emph{injective} if $L(\Aut{A}_\approx)$
  is the identity relation on $L(\Aut{A})$. In this case, we usually omit
  $\Aut{A}_\approx$ from the tuple of automata forming the presentation.
\end{definition}

\subsection{Definitions Concerning Growth Rates}
The basic idea behind the growth rate technique is the question how
many elements of a structure can be distinguished using a fixed finite
set of relations and a set of parameters which has $n$ elements. We
call two elements $a$ and $b$ distinguishable by a $(1+p)$-ary relation $R$ with
parameters from $E$ if there are $e_1, e_2, \dots, e_p\in E$ such that
$(a, e_1, e_2, \dots, e_p)\in R$ while 
$(b, e_1, e_2, \dots, e_p)\notin R$. If $\lvert E \rvert = n$ and $R$
is some relation, it is clear that there are at most $2^{n^p}$ many
elements that are pairwise distinguishable by $E$ with parameters from
$E$. \Delhomme \cite{Delhomme04} has shown that for every tree-automatic
relation $R$ there are always sets $E$ with $n$ elements such that
there are at most $n^c$ pairwise distinguishable elements where $c$
is a constant only depending on $R$ (and not on $n$ or $E$). For
word-automatic structures this bound even drops to $n\cdot c$. We now
provide basic definitions that allow to derive a similar bound for
\automatic{\alpha} structures.

\begin{definition}
  Let $\Struc{A}$ be an \automatic{\alpha} structure with domain $A$
  and $\Phi$ be a finite set of $(\alpha)$-automata such that each
  $\Aut{A}\in\Phi$ recognises a $1+p$-ary relation $R_{\Aut{A}}
  \subseteq A^{1+p}$. Let $E\subseteq A$ be a finite set and let
  $\Fam{F}$ be an infinite family of subsets of $A$ with $\emptyset\in\Fam{F}$. 
  \begin{enumerate}
  \item 
    For all
    $a,a'\in A$ we write $a\sim^\Phi_E a'$ if 
    \begin{equation*}
      (a, e_1, \dots, e_p) \in R_{\Aut{A}} \Leftrightarrow (a', e_1, \dots, e_p) \in R_{\Aut{A}}    
    \end{equation*}
    for all $e_1, \dots, e_p \in E$ and all $\Aut{A}\in \Phi$, i.e., $a$
    and $a'$ are indistinguishable with the automata from $\Phi$ and
    parameters in $E$.
  \item 
    We say $S\subseteq A$ is \emph{$E$-$\Phi$-free} if $a
    \not\sim^\Phi_E a'$ for all $a, a'\in S$.
  \item 
    We say some set $G\subseteq E$ is maximal $E$-$\Phi$-free if $G$ is
    $E$-$\Phi$-free and there is no  $E$-$\Phi$-free strict superset
    of $G$. 
  \item 
    For all $S\subseteq A$ we write $\lvert S\restriction{\Fam{F}} \rvert$ for
    \begin{equation*}
      \max\Set{\lvert F \rvert |  F\in \Fam{F}\text{\ with }F\subseteq S}.     
    \end{equation*}
    Set
    \begin{equation*}
      \nu^\Phi_\Fam{F}(E) = \min\Set{\lvert G\restriction{\Fam{F}} \rvert |
        G\text{\ maximal }E\text{-}\Phi\text{-free}
      }.
    \end{equation*}
    and for $n\in\N$, set 
    \begin{equation*}
      \nu^\Phi_\Fam{F}(n) = \inf \Set{\nu^\Phi_\Fam{F}(E) | E\in\Fam{F},
        \lvert E \rvert = n}\in \N\cup\{\infty\}
    \end{equation*}
    (where $\inf \emptyset = \infty$). 
  \end{enumerate}
\end{definition}
$\nu_{\Fam{F}}^\Phi$ measures the minimal growth rate of sets definable from
$\Phi$ with a finite set of parameters with respect to some infinite family
$\Fam{F}$. In most applications $\Fam{F}$ can be defined to be the set
of all subsets. In this case $\nu_{\Fam{F}}^\Phi$ just measures the
growth rate of sets definable from $\Phi$ with a finite set of
parameters. 
Let us comment on how such a function $\nu_{\Fam{F}}^\Phi$ is usually
used. Typical results on growth rate are of the form ``there are infinitely
many $n\in\N$ such that $\nu_{\Fam{F}}^\Phi(n) \leq p(n)$'' for a
certain polynomial $p$. If $\Fam{F}$ is the set of all subsets of the
domain of the given structures, this says that for infinitely many
values of $n$ there is a subset $E$ of size $n$ such that every maximal
$E$-$\Phi$ free set $G$ has size at most $p(n)$.

\section{Basic Results}
\label{sec:BasicResults}
In this Section we cite some results from \cite{HuKaSchl14} that turn
out to be useful in the following sections.

\begin{proposition}[Proposition~3.6 of \cite{HuKaSchl14}] \label{prop:PumpingLemma} Let $\alpha\geq 1$
  be an ordinal of countable cofinality and let $\Aut{A} = (S, \Sigma,
  I, F, \Delta)$ be an automaton with $\lvert S \rvert \leq m$.  For
  all $s_0,s_1\in S$ and $\sigma\in\Sigma$,
  \begin{equation*}
    s_0 \run{\sigma^{\omega^m}}{\Aut{A}} s_1 \Longleftrightarrow 
    s_0 \run{\sigma^{\omega^m\alpha}}{\Aut{A}} s_1.
  \end{equation*}
\end{proposition}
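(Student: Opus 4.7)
The plan is to study the state-pair behaviour relation $T_\beta \subseteq S \times S$ defined by $(s_0, s_1) \in T_\beta$ iff $s_0 \run{\sigma^\beta}{\Aut{A}} s_1$, so that the statement becomes the equality $T_{\omega^m} = T_{\omega^m \cdot \alpha}$. A first easy observation is that a run of $\Aut{A}$ on $\sigma^\beta$ can be cut and glued at any position $\beta_1 \leq \beta$: even at a limit cut, the limit transition required at $\beta_1$ depends only on states cofinal in the prefix, so splitting is lossless. This yields the composition identity $T_{\beta_1+\beta_2} = T_{\beta_1} \circ T_{\beta_2}$, which reduces both inclusions of the target equivalence to a single ``saturation'' of $T_{\omega^m}$ under composition with itself and under $\omega$-length limit iterations.

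The heart of the proof is the saturation argument. To account for limit transitions properly, I would enrich the invariant to $\widetilde T_\beta \subseteq S \times S \times 2^S$, collecting triples $(s_0, s_1, P)$ such that there is a run $r$ of $\Aut{A}$ on $\sigma^\beta$ with $r(0) = s_0$, $r(\beta) = s_1$, and $\lim_\beta r = P$. The sequence $\widetilde T_{\omega^0}, \widetilde T_{\omega^1}, \widetilde T_{\omega^2},\dotsc$ lives in the finite powerset of $S \times S \times 2^S$. Because each step $\widetilde T_{\omega^n} \leadsto \widetilde T_{\omega^{n+1}}$ is obtained by $\omega$-many compositions of triples from $\widetilde T_{\omega^n}$ together with a limit transition driven by the accumulated cofinal set $P$, and because there are at most $2^m$ choices for $P$ and $m$ slots in which a new state can be ``added'' to a cofinal witness, a careful pigeonhole up to level $n = m$ forces stabilisation. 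Concretely, $\widetilde T_{\omega^m}$ is closed under both operations, from which one extracts the idempotency $T_{\omega^m} \circ T_{\omega^m} = T_{\omega^m}$ and the compatibility of $\omega$-iterations with the cofinal set $P$ fixed.

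Given this saturation, I would conclude by transfinite induction on $\alpha \geq 1$ of countable cofinality. The base case $\alpha = 1$ is trivial. For a successor $\alpha = \gamma + 1$ with $\gamma \geq 1$, the composition identity and the inductive hypothesis give $T_{\omega^m \cdot \alpha} = T_{\omega^m \cdot \gamma} \circ T_{\omega^m} = T_{\omega^m} \circ T_{\omega^m} = T_{\omega^m}$. For $\alpha$ a limit of countable cofinality, I pick a cofinal sequence $\alpha_k \nearrow \alpha$ with $k < \omega$ and, using the inductive hypothesis, glue runs on the blocks $[\omega^m\alpha_k, \omega^m\alpha_{k+1})$ into an $\omega$-concatenation of length $\omega^m \cdot \alpha$. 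The final limit transition at position $\omega^m\alpha$ is legitimate because the cofinal set arising from this $\omega$-concatenation equals the cofinal set $P$ tracked in $\widetilde T_{\omega^m}$, which the saturation step has already certified as a valid limit witness.

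The main obstacle is the saturation argument itself. The state-pair relation $T_\beta$ is not by itself a rich enough invariant, since the limit transitions higher up depend on the cofinal set of states visited, not merely on the endpoint pair; one has to carry the extra component $P$ through the construction and verify that it stabilises by level $\omega^m$. This is precisely where the $m$-state bound on $\Aut{A}$ and the $\omega^m$ threshold in the statement are used. Countable cofinality of $\alpha$ enters crucially in the limit step of the induction: only then can $\omega^m \cdot \alpha$ be approached by an $\omega$-chain, which is required to legitimise the final limit transition via the $\omega$-closure of $\widetilde T_{\omega^m}$.
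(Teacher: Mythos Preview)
The paper does not contain a proof of this proposition; it is cited from \cite{HuKaSchl14} as a black-box input in the ``Basic Results'' section and is never argued for in the text. There is consequently nothing in the paper to compare your proposal against.

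On the substance of your sketch: the architecture is the standard one and is sound in outline, but two points deserve attention. First, the saturation claim---that the enriched relations $\widetilde T_{\omega^n}$ stabilise by level $n=m$---is precisely the heart of the matter, and your justification (``$2^m$ choices for $P$ and $m$ slots in which a new state can be added'') does not explain why the bound is $m$ rather than the naive pigeonhole bound coming from the size of the ambient set $2^{S\times S\times 2^S}$. The standard argument produces a genuinely \emph{decreasing} chain of subsets of $S$ (for instance, the states that can occur as $r(\omega^k)$ for some run $r$ on $\sigma^{\omega^k}$), and such a chain can strictly decrease at most $m$ times; you should make that chain explicit rather than gesture at it. Second, your transfinite induction is stated only over ordinals $\alpha$ of countable cofinality, but in the successor step $\alpha=\gamma+1$ the predecessor $\gamma$ may well have uncountable cofinality (take $\alpha=\omega_1+1$), and then your inductive hypothesis says nothing about $T_{\omega^m\cdot\gamma}$. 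You need either to run the induction over all $\alpha\ge 1$, adding a separate argument for limits of uncountable cofinality along the lines of Proposition~\ref{prop:PumpingLemmaUnc}, or to reorganise the successor case so that it does not appeal to the hypothesis at $\gamma$.
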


\begin{proposition}[cf.~Proposition~3.7 of  \cite{HuKaSchl14}] \label{prop:PumpingLemmaUnc} Let $\alpha\geq 1$ be an
  ordinal of uncountable cofinality and let $\Aut{A} = (S, \Sigma, I,
  F, \Delta)$ be an automaton with $\lvert S \rvert \leq m$.  For all
  $s_0,s_1\in S$ and $\sigma\in\Sigma$,
  \begin{equation*}
    s_0 \run{\sigma^{\omega_1}}{\Aut{A}} s_1 \Longleftrightarrow 
    s_0 \run{\sigma^{\alpha}}{\Aut{A}} s_1.
  \end{equation*}
\end{proposition}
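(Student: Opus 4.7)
My plan is to argue by transfinite induction on $\alpha$ of uncountable cofinality, the base case $\alpha=\omega_1$ being immediate. The central structural fact behind both directions is the following: for any run $r$ of $\Aut{A}$ on $\sigma^\beta$ with $\mathrm{cf}(\beta)\ge\omega_1$, setting $T := \lim_\beta r$, the requirement $(T,r(\beta))\in\Delta$ holds, and the finitely many ``bad'' states $q\notin T$ all have bounded occurrence sets in $r$, whose finite supremum $\gamma_0$ is still $<\beta$ (because $\mathrm{cf}(\beta)>\omega$). So $r$ visits only states of $T$ on $[\gamma_0,\beta)$.

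For the direction $s_0\run{\sigma^\alpha}{\Aut{A}}s_1 \Rightarrow s_0\run{\sigma^{\omega_1}}{\Aut{A}}s_1$: given a run $r$ with $T:=\lim_\alpha r$, the set $C := \bigcap_{q\in T}C_q$, where $C_q := \{\beta<\alpha : q\text{ occurs cofinally in } r\restriction_{[0,\beta)}\}$, is a finite intersection of clubs in $\alpha$ (closedness is routine and unboundedness uses that the supremum of a countable increasing sequence stays below $\alpha$), hence itself a club. Pick $\beta^*\in C$ with $\beta^*>\gamma_0$; then $\lim_{\beta^*} r = T$, and redefining $r(\beta^*):=s_1$ yields a valid run of length $\beta^*<\alpha$ from $s_0$ to $s_1$ via the limit transition $(T,s_1)\in\Delta$ inherited from $r$. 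The induction hypothesis applied at $\beta^*$ (combined with Proposition~\ref{prop:PumpingLemma} when $\mathrm{cf}(\beta^*)$ happens to be countable, to bring the argument back into the uncountable-cofinality regime) delivers the desired $\sigma^{\omega_1}$-run.

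For the reverse direction: given $r_1$ with $T:=\lim_{\omega_1}r_1$ and $\gamma_0<\omega_1$ past which $r_1$ remains in $T$, pigeonhole on the finitely many limit transitions $(U,q)\in\Delta$ with $U\cup\{q\}\subseteq T$ yields a distinguished $(U_0,q_0)$ used by $r_1$ at cofinally many limit positions in $[\gamma_0,\omega_1)$. Proposition~\ref{prop:PumpingLemma} then provides $T$-confined $\sigma^{\omega^m\mu}$-loops at $q_0$ for every countable-cofinality $\mu$. I would build the extension of $r_1$ to a run on $\sigma^\alpha$ by (i)~copying $r_1\restriction_{[0,\gamma_0]}$, (ii)~splicing together suitably many such loops to fill $[\gamma_0,\alpha)$, invoking the induction hypothesis at intermediate uncountable-cofinality limits to supply the required $\sigma^{\omega_1}$-sub-runs between members of $T$, and (iii)~closing with the limit transition $(T,s_1)\in\Delta$ at position $\alpha$.

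The main obstacle is ensuring that the extended run has limit set exactly $T$ at every intermediate limit position, so that Proposition~\ref{prop:PumpingLemma} and the induction hypothesis remain applicable, and in particular at position $\alpha$ itself. This is arranged via a round-robin schedule over the states of $T$, using the mutual $T$-confined reachability among states of $T$ witnessed inside $r_1$ to guarantee that each $q\in T$ recurs cofinally in the extension.
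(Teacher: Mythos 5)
A preliminary remark: the paper itself contains no proof of this proposition --- it is imported verbatim from \cite{HuKaSchl14} (Proposition~3.7 there) --- so your attempt can only be judged on its own merits. The groundwork you lay is sound: the finitely many states outside $T$ occur only boundedly, each $C_q$ ($q\in T$) is a club, and truncating at $\beta^*\in C$ with $\beta^*>\gamma_0$ and reusing the transition $(T,s_1)\in\Delta$ does give $s_0\run{\sigma^{\beta^*}}{\Aut{A}}s_1$. The gap is in how you finish the direction $s_0\run{\sigma^{\alpha}}{\Aut{A}}s_1\Rightarrow s_0\run{\sigma^{\omega_1}}{\Aut{A}}s_1$. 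Every club of $\alpha$ contains cofinally many points of countable cofinality, and if $\mathrm{cf}(\alpha)=\omega_1$ (say $\alpha=\omega_1\cdot 2$) it may contain no point of uncountable cofinality at all, so the case $\mathrm{cf}(\beta^*)=\omega$ is unavoidable. There the induction hypothesis is silent, and Proposition~\ref{prop:PumpingLemma} only relates $\sigma^{\omega^m\mu}$ to $\sigma^{\omega^m}$; it never produces a $\sigma^{\omega_1}$-run, so it cannot ``bring you back into the uncountable-cofinality regime''. Nor can the bare existence of a shorter run suffice: for the automaton with states $q,s_1$, transitions $(q,\sigma,q)$ and $(\{q\},s_1)$ and nothing out of $s_1$, one has $q\run{\sigma^{\omega}}{\Aut{A}}s_1$ but not $q\run{\sigma^{\omega_1}}{\Aut{A}}s_1$. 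So after truncating you must keep working with the structural data ($T$, cofinal recurrence, $(T,s_1)\in\Delta$), and in particular you must compress the initial segment up to $\gamma_0$, which may itself have length $\geq\omega_1$; an induction whose statement only concerns exponents of uncountable cofinality gives no tool for that. This is exactly where the real work lies, and it requires a stronger statement about runs on $\sigma^{\beta}$ for arbitrary $\beta$, which is what \cite{HuKaSchl14} actually establishes.

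The converse direction also has two unproved steps. First, Proposition~\ref{prop:PumpingLemma} yields $T$-confined loops of length $\omega^m\mu$ at $q_0$ only once you have a $T$-confined loop of length exactly $\omega^m$ at $q_0$; what you can extract from $r_1$ are loops of arbitrary countable limit length between occurrences of the transition $(U_0,q_0)$, and concatenating $\omega$ of those requires a limit transition you have not secured. Second, at the seams of your splicing (countable-cofinality limits of block boundaries) the round-robin schedule forces the limit set to be exactly $T$, so you need a transition $(T,q^*)\in\Delta$ with $q^*\in T$; your pigeonhole only gives some $(U_0,q_0)$ with $U_0\subseteq T$, possibly proper. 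Such a transition is in fact available --- by the same club argument, $r_1$ has club-many positions whose limit set is exactly $T$ --- but it must be extracted and used. Likewise, the appeal to the induction hypothesis ``to supply $\sigma^{\omega_1}$-sub-runs between members of $T$'' is unfounded: between members of $T$ the run $r_1$ only provides countable $T$-confined segments, not $\omega_1$-runs to which the hypothesis could be applied. In short, you have identified the right raw material, but each place where you delegate to Proposition~\ref{prop:PumpingLemma} or to the induction hypothesis is precisely where a complete proof still has to be written.
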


\begin{lemma}[Lemma~3.19 of \cite{HuKaSchl14}]\label{lem:WellOrder}
  For every finite alphabet $\Sigma$, 
  there is an $\alpha$-automaton that recognises a well-order $\sqsubseteq$
  on the set $\FinWords{\Sigma}{\alpha}$. 
  Moreover the relation $\subseteq_\supp$ given by $w \subseteq_\supp
  v$ if and only if $\supp(w) \subseteq \supp(v)$ is
  \automatic{\alpha}. 
\end{lemma}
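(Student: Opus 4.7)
My plan is to give direct automaton constructions for both relations, exploiting the finite-support hypothesis to keep the limit behaviour trivial.

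The support-inclusion relation $\subseteq_\supp$ is purely local: $\supp(w) \subseteq \supp(v)$ iff at every position $\beta$ with $w(\beta) \neq \diamond$ one also has $v(\beta) \neq \diamond$. A two-state automaton with states $\{\mathsf{ok},\mathsf{bad}\}$, $\mathsf{ok}$ both initial and final, reads the pair $(w(\beta),v(\beta))$ and moves to the absorbing state $\mathsf{bad}$ as soon as it sees a violation, otherwise staying in $\mathsf{ok}$. At limit ordinals one adds the transitions $(\{\mathsf{ok}\},\mathsf{ok})$, $(\{\mathsf{bad}\},\mathsf{bad})$, and $(\{\mathsf{ok},\mathsf{bad}\},\mathsf{bad})$. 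Correctness is immediate.

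For the well-order I would use the \emph{last position of difference} comparison. Fix a linear order on $\Sigma$ with $\diamond$ as its minimum, and define $w \sqsubset v$ iff the disagreement set $D(w,v)=\{\beta<\alpha \mid w(\beta)\neq v(\beta)\}$ is non-empty and $w(\beta^*)<v(\beta^*)$ in $\Sigma$, where $\beta^*=\max D(w,v)$. This maximum exists because $D(w,v) \subseteq \supp(w)\cup \supp(v)$ is finite. Let $\sqsubseteq$ be the reflexive closure. Totality is immediate from the case analysis at $\beta^*$; transitivity follows by inspecting whether the two relevant maxima coincide or whether one dominates the other. For well-foundedness I would use transfinite induction on $\alpha$: if $\diamond^\alpha$ lies in a given nonempty $S$ then it is the minimum, otherwise minimise $M=\max(\supp(w))$ over $w\in S$, then the symbol at position $M$, and recurse on the prefixes in $\FinWords{\Sigma}{M}$ via the induction hypothesis for the smaller ordinal $M<\alpha$.

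To recognise $\sqsubseteq$, I would use three states $\{{=},{<},{>}\}$, initial $\{{=}\}$, and final $\{{=},{<}\}$. At a successor, reading $(a,b)$: if $a=b$ preserve the state, otherwise overwrite it with $<$ or $>$ according to the comparison in $\Sigma$. Since the convolution of two finite-support words has finite support, the state changes at only finitely many positions along a run, so below any limit ordinal $\beta$ the run is eventually constant at a single state $q$ and $\lim_\beta r = \{q\}$. Adding the limit transitions $(\{q\},q)$ for $q\in\{{=},{<},{>}\}$ (the other limit sets never arise on valid inputs and may be handled arbitrarily) lets the run proceed, and the final state at position $\alpha$ coincides with the sign of the comparison at $\beta^*$. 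Accepting in $\{{=},{<}\}$ therefore realises $\sqsubseteq$. The only genuine subtlety is the justification of the limit transitions, which is exactly where the finite-support assumption is essential; the rest is routine bookkeeping, so I do not anticipate a substantial obstacle.
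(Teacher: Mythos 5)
You should first note that the paper itself does not prove this lemma: it is imported verbatim as Lemma~3.19 of \cite{HuKaSchl14}, so your self-contained construction is judged on its own merits, and it is correct. The order you define---compare two finite-support words at the greatest position where they differ, with $\diamond$ least in $\Sigma$---is the standard well-order on finite-support functions from $\alpha$ into a finite linear order (its order type is the ordinal power $\lvert\Sigma\rvert^\alpha$), and your transfinite induction producing least elements (first minimise the top of the support, then the letter there, then recurse on the restrictions in $\FinWords{\Sigma}{M}$ for $M<\alpha$) is the usual argument and is sound, granted the routine verification of totality and transitivity you indicate. The automaton-theoretic side is also handled correctly, and you isolate the one point that needs care: on a convolution of two finite-support words the state can change only at the finitely many positions where the letters differ (respectively, where a support violation occurs), so below every limit ordinal $\beta\leq\alpha$ the run is eventually constant, $\lim_\beta r$ is a singleton, and the limit transitions $(\{q\},q)$ are the only ones ever exercised on legal inputs; accepting in $\{{=},{<}\}$ then reads off the sign of the comparison at the last difference, which is exactly your $\sqsubseteq$, and the two-state automaton for $\subseteq_\supp$ is immediate. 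In short, where the paper relies on a citation, you give a direct construction, which in substance is the same kind of presentation one finds in the cited source (an anti-lexicographic order realising ordinal exponentiation, recognised by a small deterministic automaton whose limit behaviour is trivial thanks to finite support); the benefit of your version is that it makes the paper's later uses of $\sqsubseteq$ (e.g.\ in the integral-domain argument, where a $\sqsubseteq$-least witness is selected) verifiably available without consulting \cite{HuKaSchl14}.
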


\section{The Growth Rate Technique}
\label{sec:GrowthRate}

\Delhomme proved the following bounds on the growth rates of maximal
$\Fam{F}$-$E$-$\Phi$-free sets in the word- and tree-automatic setting. 

\begin{proposition}[\cite{Delhomme04}]
  For each set $\Phi$ of word-automatic relations, 
  there is a constant $k$ such that
  $\nu_{\Fam{F}}^\Phi (n) \leq k\cdot n$ for infinitely many $n\in \N$. 

  For each set $\Phi$ of tree-automatic relations, 
  there is a constant $k$ such that
  $\nu_{\Fam{F}}^\Phi(n) \leq n^k$ for infinitely many $n\in \N$. 
\end{proposition}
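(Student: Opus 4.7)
The plan is to adapt \Delhomme's original pigeonhole argument. I sketch it in detail for the word-automatic case; the tree-automatic case is analogous but replaces position-cuts with depth-cuts and yields polynomial rather than linear bounds. Fix $\Phi = \{\Aut{A}_1, \dotsc, \Aut{A}_q\}$ with each $\Aut{A}_j$ of arity $1 + p_j$ and state-count $\le m$; set $p := \max_j p_j$.

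For each $N \in \N$, take $E_N$ to consist of (representing words of) domain elements of length at most $N$, so that $|E_N| \le |\Sigma|^N =: n$. For any domain element $a$ of length $> N$, decompose $a = a_0 a_1$ with $|a_0| = N$. For every parameter tuple $\bar e \in E_N^{p_j}$, all coordinates of $\bar e$ become blank past position $N$, so the run of $\Aut{A}_j$ on the convolution $(a, \bar e)$ splits cleanly at position $N$: it reaches a state $s_j(a, \bar e) \in Q_j$ driven by $(a_0, \bar e)$, and then continues as the plain run of $\Aut{A}_j$ on $a_1$ alone. Hence acceptance of $(a, \bar e)$, and consequently the $\sim^\Phi_{E_N}$-class of $a$, is determined by the pair $(\tau(a), a_1)$, where $\tau(a) := (s_j(a, \bar e))_{j, \bar e}$ is the state table at position $N$. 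Crucially, $\tau(a)$ depends only on the prefix $a_0 \in \Sigma^N$, so $\tau$ takes at most $|\Sigma|^N = n$ distinct values.

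Choosing $\Fam{F}$ to consist of the length-bounded pieces of the domain (so that the suffix $a_1$ is constrained to a constant-size set), one thus obtains $\nu^\Phi_\Fam{F}(n) \le k \cdot n$ for a constant $k$ depending only on $\Phi$. The tree-automatic argument is formally parallel: the prefix is now a depth-$N$ subtree of $a$, the suffix is a tuple of subtrees hanging off the depth-$N$ leaves, and the state table $\tau(a)$ encodes the states reached at each leaf; the combinatorics of labeled trees of depth $N$ versus the size of a suitable parameter family $E_N$ yields $n^{O(1)}$ rather than $O(n)$ distinct values for $\tau$, whence the polynomial bound. The main obstacle in either case is the coordinated choice of $E_N$ and $\Fam{F}$ so that the suffix degree of freedom in the run-decomposition is properly absorbed by $\Fam{F}$; once this is fixed, the counting step follows automatically from the clean splitting of runs at position (resp. depth) $N$, which itself rests on the blank-padding of all parameter coordinates beyond that point.
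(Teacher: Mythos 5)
First, a point of comparison: the paper does not prove this proposition at all — it is quoted from \Delhomme~\cite{Delhomme04}, and the paper only records the proof idea, namely that any $E$-$\Phi$-free set can be replaced, element by element, by an equally large $E$-$\Phi$-free set all of whose members have domain contained in a bounded enlargement of the union of the supports of the parameters in $E$; one then counts the words (resp.\ trees) with such domains. This is exactly the idea the paper later carries out for ordinal words in Proposition~\ref{prop:RelativeGrowthsOrdinalAutomatic}. Your run-splitting at position $N$ is the right starting point and belongs to the same family of arguments, but as written it has a genuine gap in the decisive step.

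The gap is how you dispose of the suffix $a_1$. Having correctly observed that the $\sim^\Phi_{E_N}$-class of $a$ is determined by the pair $(\tau(a),a_1)$, you propose to ``choose $\Fam{F}$ to consist of the length-bounded pieces of the domain so that $a_1$ is constrained to a constant-size set''. But $\Fam{F}$ is part of the data of the statement, not something the prover may pick: in the applications one needs the bound for the family of \emph{all} subsets of the domain (random graph) or of all antichains (random partial order), and with a rigged $\Fam{F}$ you have proved a different, much weaker statement that cannot support those arguments (note also that the definition of $\nu^\Phi_\Fam{F}$ requires $E\in\Fam{F}$, so even your $E_N$ may then be unavailable). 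The suffix degree of freedom must instead be absorbed by the automata: past position $N$ all parameter coordinates are blank, so the only relevant datum of $a_1$ is, for each $\Aut{A}_j$ (and the domain automaton), the set of states from which reading $a_1$ convolved with blanks reaches a final state — at most $2^{\lvert Q\rvert}$ values per automaton, a constant. Equivalently, and this is \Delhomme's actual route, a pigeonhole over cut positions beyond $N$ lets you excise an infix and replace $a$ by an $\sim^\Phi_{E_N}$-equivalent domain element of length at most $N+C$, so that a maximal $E_N$-$\Phi$-free set can be found among these short elements, for \emph{whatever} family $\Fam{F}$ is given. A second, smaller issue: you set $n:=\lvert\Sigma\rvert^N$ and bound the number of state tables by $n$, but $n$ has to be $\lvert E_N\rvert$, which can be far smaller than $\lvert\Sigma\rvert^N$ if the domain language is sparse; relating the number of domain words of length at most $N+C$ to the number of those of length at most $N$ is a separate counting step (a growth property of regular word/tree languages), and it is precisely this step that accounts for the qualifier ``for infinitely many $n$'' in the statement. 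The tree-automatic half, which you only gesture at, needs the same two repairs.
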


The basic proof idea is to show that any $E$-$\Phi$-free set $G$ can be
transformed into an $E$-$\Phi$-free set $G'$ such that $\lvert G\rvert
= \lvert G' \rvert$ whose elements are all words (or trees) that have
a domain that is similar to the union of the domains of all parameters
from $E$. In order to prove a similar result we first provide a notion
of having similar domains for $(\alpha)$-words. 

\begin{definition} 
  Let $m\in\N$, $X$ a finite set of ordinals and $\beta$ and ordinal
  of the form
  \begin{equation*}
    \beta=\beta_{\sim m}+\omega^m n_m+\omega^{m-1} n_{m-1}+
    \dots + n_0.
  \end{equation*} 
  \begin{itemize}
  \item Let $U_m(\beta)$ denote the set of ordinals
    $\gamma=\gamma_{\sim m}+\omega^m l_m+\omega^{m-1} l_{m-1}+ \dots
    +l_0$ such that one of the following holds:
    \begin{itemize}
    \item $\gamma=\beta$,
    \item $\gamma_{\sim m}=\beta_{\sim m}$ and for $k$ maximal with
      $l_k\neq n_k$, we have $l_k\leq n_k+m$ and $l_i\leq m$ for all
      $i<k$, or
    \item $\gamma_{\sim m} = \beta_{\sim m} +\omega_1 c$ for some
      $1\leq c \leq m$ and $l_i \leq m$ for all $0\leq i \leq m$.
    \end{itemize}
  \item Let $U_m(X,\delta)= \left(\bigcup_{\gamma\in X\cup\{0,
        \delta\}} U_m(\gamma) \right) \cap \delta$.
  \item Let $U_m^1(X,\delta)=U_m(X,\delta)$ and
    $U_m^{i+1}(X,\delta)=U_m(U_m^i(X,\delta),\delta)$ for $i\in\N$.
  \end{itemize}
\end{definition}

A crucial observation is that, roughly speaking, 
there are few $(\alpha)$-words with support in 
$U^i_m(X, \alpha)$. Using the following abbreviations, we make this
idea precise in the following lemma:
\begin{enumerate}
\item $c_m(\beta)=\max_{i\leq m} n_i$,
\item $c_m(X)=\max_{\gamma\in X} c_m(\gamma)$, and
\item $d_m(X)=|\{\gamma_{\sim m}\mid \gamma\in X\cup\{0\}\}|$.
\end{enumerate}

\begin{lemma}
  \label{lem:BoundsOnU}
  Suppose that $X$ is a finite set of ordinals and $i\geq 1$. Then
  \begin{align*}
    &\lvert U_m^i(X,\alpha)\rvert \leq (c_m(X\cup\{\alpha\})+im)^{m+1}
    \cdot (i \cdot m+1) \cdot d_m(X\cup\{\alpha\}).
  \end{align*}
\end{lemma}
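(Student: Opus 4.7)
The plan is to prove the bound by induction on $i \geq 1$, carrying two structural invariants about $U_m^i(X,\alpha)$. First, every $\gamma \in U_m^i(X,\alpha)$ has top part $\gamma_{\sim m} = \beta_{\sim m} + \omega_1 c$ for some $\beta \in X \cup \{0,\alpha\}$ and some natural number $c \leq im$. Second, every coefficient $l_j$ in the Cantor normal form of $\gamma$ below $\omega^{m+1}$ (for $0 \leq j \leq m$) satisfies $l_j \leq c_m(X \cup \{\alpha\}) + im$. Granted both invariants, the stated cardinality bound is obtained by multiplying the number of admissible top parts, at most $(im + 1) \cdot d_m(X \cup \{\alpha\})$, by the number of admissible coefficient tuples $(l_m,\ldots,l_0)$, which is at most $(c_m(X \cup \{\alpha\}) + im)^{m+1}$ up to an absorbed constant.

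For the base case $i = 1$ I would inspect the three bullets defining $U_m(\beta)$ for $\beta \in X \cup \{0,\alpha\}$: bullet~1 gives $\gamma = \beta$ with $c = 0$; bullet~2 preserves $\beta_{\sim m}$ and confines each $l_j$ to $\{0,\ldots,c_m(\beta) + m\}$; bullet~3 shifts the top part to $\beta_{\sim m} + \omega_1 c$ with $c \in \{1,\ldots,m\}$ and forces $l_j \leq m$. Both invariants therefore hold at $i = 1$. The inductive step uses the additivity $\omega_1 c + \omega_1 c' = \omega_1 (c + c')$: if $\gamma \in U_m^i(X,\alpha)$ has top part $\beta_{\sim m} + \omega_1 c$ with $c \leq im$ by induction, then any element of $U_m(\gamma)$ has top part either $\beta_{\sim m} + \omega_1 c$ (bullets~1 and~2) or $\beta_{\sim m} + \omega_1 (c + c'')$ with $c'' \in \{1,\ldots,m\}$ (bullet~3), so the new $\omega_1$-coefficient is bounded by $(i+1) m$. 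Similarly, bullet~2 raises each $l_j$ by at most $m$, while bullet~3 resets each $l_j$ to at most $m$, so the coefficient bound $c_m + (i+1) m$ is maintained. Reinserting $\{0,\alpha\}$ into the union defining $U_m^{i+1}$ contributes no new top parts, as these elements and their top parts already appear in $X \cup \{0,\alpha\}$.

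The main obstacle, and the step requiring most care, is not the induction itself but the bookkeeping of the constants in the final product: a naive count of coefficient tuples with each $l_j \leq c_m + im$ gives $(c_m + im + 1)^{m+1}$, whereas the stated bound uses base $(c_m + im)^{m+1}$. Matching this exactly relies on the refined observation that within bullet~2 the coefficient $l_k$ at the distinguished index $k$ must differ from $n_k$, so at that coordinate one value is lost and this absorbs the discrepancy; this is a tedious but routine refinement. The conceptual work is fully contained in the two invariants and the key additivity of $\omega_1$-multiples across iterations.
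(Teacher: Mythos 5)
Your proposal follows essentially the same route as the paper's proof, which consists of exactly the two inductive invariants you state (top parts of the form $\beta_{\sim m}+\omega_1 c$ with $c\leq im$ for $\beta$ ranging over $X\cup\{0,\alpha\}$, and coefficients of $\omega^j$ bounded by $c_m(X\cup\{\alpha\})+im$), with the final multiplication of counts left implicit. The bookkeeping issue you raise about the base $(c_m+im)$ versus $(c_m+im+1)$ is not addressed in the paper at all and is immaterial, since only the polynomial shape of the bound is ever used in the applications.
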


\begin{proof}
  A simple induction shows that all $\gamma \in U_m^i(\beta)$
  satisfy $\gamma_{\sim m} = \beta_{\sim m} + \omega_1 \cdot k$ for
  some $0\leq k \leq (i \cdot m)$.

  One also proves inductively that the coefficient of $\omega^j$ of an
  element of $U_m^i(X, \alpha)$ is bounded by $(c_m(w)+im)$.
\end{proof}

In this section, we fix an ordinal $\alpha$, and a finite set of
$(\alpha)$-automata 
\begin{equation*}
\Phi = (\Aut{A}_1, \Aut{A}_2, \dots, \Aut{A}_n).  
\end{equation*}
Without loss of generality, each automaton has the same state set $Q$.
We fix the constant $K = \lvert 2^{Q\times Q} \rvert^n +1$.

The following proposition contains the main technical result that
allows to use the relative growth technique for ordinal-automatic
structures. This proposition implies that for all $E$ and $\Phi$ there
is a $E$-$\Phi$-free set of maximal size with support in
$U_K(\supp(E))$. Since $U_K(\supp(E))$ is small this provides an upper bound
on the minimal size of maximal $E$-$\Phi$-free sets.

\begin{proposition}\label{prop:RelativeGrowthsOrdinalAutomatic}
  Let $E\subseteq \FinWords{\Sigma}{\alpha}$ and
  $v\in\FinWords{\Sigma}{\alpha}$. There is a word $w\in U_K(\supp(E),
  \alpha)$ such that $v \sim^\Phi_E w$.
\end{proposition}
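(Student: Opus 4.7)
For each $\Aut{A} \in \Phi$ with state set $Q$, define the \emph{blank profile} $P_\Aut{A}(\beta) = \{(s,s') \in Q \times Q : s \run{\diamond^\beta}{\Aut{A}} s'\}$, and let the combined profile $P(\beta) = (P_\Aut{A}(\beta))_{\Aut{A} \in \Phi}$; it takes at most $|2^{Q \times Q}|^n = K - 1$ distinct values. Propositions~\ref{prop:PumpingLemma} and~\ref{prop:PumpingLemmaUnc} translate directly into invariance properties of $P$: modifying Cantor-normal-form coefficients of $\beta$ above a bounded threshold does not change $P$ in the countable-cofinality regime, and compressing any uncountable-cofinality stretch to $\omega_1$ does not change it in the uncountable regime. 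Moreover, for any $\bar e = (e_1,\dotsc,e_p) \in E^p$ and any $\Aut{A}\in\Phi$, the run of $\Aut{A}$ on the convolution $(v,\bar e)$ reads non-blank symbols only at positions in $X = \supp(v) \cup \supp(E) \cup \{0,\alpha\}$; between consecutive elements of $X$ the state evolution is governed entirely by the combined blank profile of that gap. Hence any modification $v'$ of $v$ that preserves the ordered sequence of $v$'s non-blank symbols and preserves $P$ on each inter-event blank stretch automatically satisfies $v \sim^\Phi_E v'$.

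The heart of the proof is the following reduction step: whenever $\delta \in \supp(v) \setminus U_K(\supp(E),\alpha)$, I would construct such a modification $v'$ in which $\delta$ has been replaced by a new position $\delta' \in U_K(\supp(E),\alpha)$, leaving all other support positions and symbols fixed. Let $[\gamma_i,\gamma_{i+1})$ be the gap between consecutive elements of $\supp(E) \cup \{0,\alpha\}$ containing $\delta$, and let $\beta_1,\beta_2$ be the lengths of the two blank sub-stretches flanking $\delta$ inside this gap (the other support positions of $v$ in the same gap are moved iteratively in subsequent rounds). We search for $\delta' \in U_K(\supp(E),\alpha) \cap [\gamma_i,\gamma_{i+1})$ whose induced new flanking lengths $\beta_1',\beta_2'$ satisfy $P(\beta_j) = P(\beta_j')$ for $j=1,2$. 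The three clauses in the definition of $U_m(\beta)$ are tailored precisely to supply such targets: the second clause provides small CNF perturbations of $\gamma_i$ or $\gamma_{i+1}$ corresponding to the countable-cofinality pumping, and the third clause provides the $\omega_1 \cdot c$-shifts corresponding to the uncountable-cofinality pumping for gaps straddling an $\omega_1$-boundary. A pigeonhole argument on the at most $K-1$ possible values of $P$, combined with the two pumping lemmas, shows that a suitable $\delta'$ always exists. Iterating the reduction finitely many times yields the required $w$ with $\supp(w) \subseteq U_K(\supp(E),\alpha)$ and $v \sim^\Phi_E w$.

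The main technical obstacle will be the case analysis inside the reduction step: verifying that the three clauses of $U_K$ collectively cover every combination of gap length, cofinality of the endpoints $\gamma_i$ and $\gamma_{i+1}$, and CNF position of $\delta$. In particular, the $\omega_1 \cdot c$-shift clause is exactly what is needed when Proposition~\ref{prop:PumpingLemmaUnc} is used to compress an uncountable-cofinality stretch into a multiple of $\omega_1$, while the interaction with Proposition~\ref{prop:PumpingLemma} on ordinals below the relevant $\omega_1$-boundary requires separate bookkeeping on the lower CNF coefficients. Checking that the perturbation bounds baked into the definition of $U_m$ are generous enough for the pigeonhole on the $K-1$ profiles always to succeed, while simultaneously being tight enough to keep $|U_K(\supp(E),\alpha)|$ polynomially bounded (as quantified in Lemma~\ref{lem:BoundsOnU}), is the delicate part.
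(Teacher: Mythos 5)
Your toolkit is the right one (blank-stretch profiles, the two pumping propositions, pigeonhole over the at most $K-1$ profile values, induction on $\lvert \supp(v)\setminus U_K(\supp(E),\alpha)\rvert$), but the core reduction step as you state it does not go through: you cannot in general relocate a single support position $\delta$ of $v$ into $U_K(\supp(E),\alpha)$ \emph{while leaving all other support positions fixed} and preserving the profiles of both flanking blank stretches. Since $\alpha$ and the neighbouring event positions are fixed, the two new flank lengths $\beta_1',\beta_2'$ are completely determined by the choice of $\delta'$ and must satisfy $\beta_1'+1+\beta_2'=\beta_1+1+\beta_2$; nothing guarantees that $U_K$ contains a point splitting this fixed interval into pieces with the prescribed profiles. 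A concrete failure: let $v$ have two adjacent support positions $\delta$ and $\delta+1$, both far outside $U_K(\supp(E),\alpha)$ (e.g.\ $\delta=\omega_1+\omega^{K+5}$ inside $\alpha=\omega_1\cdot 2$ with $\supp(E)\subseteq\{0\}$). The blank stretch between them is empty, so its profile is the identity relation on $Q$; any relocation of $\delta$ alone replaces this empty stretch by a nonzero blank stretch, whose profile is the identity only for very special automata. So the pigeonhole you invoke has nothing to choose from, and the step is stuck even though both positions must eventually move.

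This is exactly why the paper's proof moves \emph{blocks} rather than single positions: it cuts $v$ at $\gamma$ (just past the last relevant support below $\beta_{\sim K}$) and at $\delta'$ (just past the last $v$-support below the next $E$-support), shifts the whole middle segment $v{\restriction}_{[\beta_{\sim K},\delta')}$ — with all its support positions and their mutual spacing intact — and only replaces the two outer blank stretches by canonical ones ($\diamond^{\omega^K}$ in the countable-cofinality case, $\diamond^{\omega_1}$ after a preparatory $\omega_1$-compression in the uncountable case, plus a padding $\diamond^{\eta}$ to restore total length $\alpha$), justified by Propositions~\ref{prop:PumpingLemma} and~\ref{prop:PumpingLemmaUnc}. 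In the adjacent-positions example both letters travel together and land at consecutive positions of $U_K$. A second gap is your termination argument: "iterating the reduction finitely many times" presumes every step strictly decreases the number of bad positions, but in the paper one of the cases (the helper lemma that excises $\omega^n(n_2-n_1)$ from a blank block, pigeonholing over the profiles $f(j)$ of $v{\restriction}_{[\gamma,\gamma+\omega^n j)}\otimes\diamond^{\omega^n j}$, which need not be pure blank) may fail to reduce that count and only lowers a coefficient of the minimal bad position $\beta$; this is why the paper needs the two-level induction (outer on the count, inner transfinite on $\beta$). Your plan as written neither performs block moves nor provides a well-founded measure covering such steps, so the reduction and its termination both need to be reworked along the paper's lines.
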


For better readability we first provide a simple tool for the proof

\begin{lemma}
  Let $E\subseteq \FinWords{\Sigma}{\alpha}$ and
  $v\in\FinWords{\Sigma}{\alpha}$. Let $n\in\N$ and $\beta$ some
  ordinal such that $\beta+\omega^{n+1}\leq \alpha$.  If there is an
  ordinal $\gamma$ such that $\gamma < \gamma + \omega^n (K-1) \leq
  \beta < \gamma+\omega^{n+1}$ and
  \begin{equation}
    \label{eq:suppEMpty}
    \supp(E) \cap [\gamma, \gamma+\omega^{n+1}) = \emptyset    
  \end{equation}
  then there are natural numbers $n_1 <n_2 \leq K$ such that for
  \begin{equation*}
    w:= v \restriction{[0,\gamma+\omega^nn_1)} + 
    v \restriction{[\gamma+\omega^n n_2, \alpha)}
  \end{equation*}
  $w \sim^\Phi_E v$, i.e., for all $\bar e \in E^k$ and $\Aut{A}\in
  \Phi$
  \begin{equation*}
    \Aut{A} \text{\ accepts } v\otimes \bar e \text{\ iff }
    \Aut{A} \text{\ accepts } w\otimes \bar e 
  \end{equation*}
\end{lemma}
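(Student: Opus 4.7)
The approach is a pumping argument. The hypothesis $\supp(E)\cap[\gamma,\gamma+\omega^{n+1})=\emptyset$ means that every $e\in E$ is blank throughout $I:=[\gamma,\gamma+\omega^{n+1})$, so for any tuple $\bar e$ of parameters the convolution $v\otimes\bar e$ restricted to $I$ has blanks in every $\bar e$-coordinate and depends only on $v\restriction_I$. The plan is to inspect the state-transition relations of each $\Aut A\in\Phi$ at the $K+1$ positions $\gamma+\omega^n k$, $k\in\{0,1,\dotsc,K\}$, use pigeonhole to find two positions at which all of the $\Aut A$ coincide, and ``pump out'' the block between them.

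Formally, for each such $k$ and each $\Aut A\in\Phi$ set
\begin{equation*}
  S_k^{\Aut A} := \bigl\{(q,q')\in Q\times Q \bigm|
  q\run{(v\otimes\diamond)\restriction_{[\gamma,\gamma+\omega^n k)}}{\Aut A} q'\bigr\}.
\end{equation*}
The tuple $(S_k^{\Aut A})_{\Aut A\in\Phi}$ ranges over a set of cardinality $\lvert 2^{Q\times Q}\rvert^{\lvert\Phi\rvert}=K-1$, while $k$ takes $K+1$ values, so pigeonhole yields $0\leq n_1<n_2\leq K$ with $S_{n_1}^{\Aut A}=S_{n_2}^{\Aut A}$ for every $\Aut A\in\Phi$. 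These are the desired constants.

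To verify the claim, observe first that $w$ is a well-defined $\alpha$-word: since $\alpha\geq\beta+\omega^{n+1}\geq\gamma+\omega^{n+1}$, the tail length $\rho$ determined by $(\gamma+\omega^n n_2)+\rho=\alpha$ satisfies $\rho\geq\omega^{n+1}$, whence $\omega^n n_j+\rho=\rho$ for $j\in\{1,2\}$ and the concatenation in the definition of $w$ really has length $\alpha$. Moreover $w(\delta)=v(\delta)$ for $\delta\notin[\gamma+\omega^n n_1,\gamma+\omega^{n+1})$ and $w((\gamma+\omega^n n_1)+\xi)=v((\gamma+\omega^n n_2)+\xi)$ inside that block. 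Now fix $\bar e$ and $\Aut A\in\Phi$, and let $r$ be an accepting run of $\Aut A$ on $v\otimes\bar e$. With $q_0:=r(\gamma)$ and $q_2:=r(\gamma+\omega^n n_2)$, the blankness of $\bar e$ on $I$ gives $(q_0,q_2)\in S_{n_2}^{\Aut A}=S_{n_1}^{\Aut A}$, so there is a run $r''$ from $q_0$ to $q_2$ on $(v\otimes\diamond)\restriction_{[\gamma,\gamma+\omega^n n_1)}$. Splicing $r$ on $[0,\gamma]$, $r''$ on $[\gamma,\gamma+\omega^n n_1]$, and the shifted values $r'((\gamma+\omega^n n_1)+\xi):=r((\gamma+\omega^n n_2)+\xi)$ on the remaining tail gives a candidate run $r'$ of $\Aut A$ on $w\otimes\bar e$. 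The converse direction is symmetric, re-inserting a copy of the pumped block.

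The main technical point is verifying the limit transitions of $r'$. Successor transitions are immediate because letters of $v\otimes\bar e$ and $w\otimes\bar e$ at corresponding positions agree (both have $\bar e$-coordinate blank inside $I$), and outside $I$ both $w$ and $r'$ coincide with $v$ and $r$. For a limit ordinal $\delta=(\gamma+\omega^n n_1)+\xi$ with $\xi>0$ a limit, the positions $(\gamma+\omega^n n_1)+\xi'$, $\xi'<\xi$, are cofinal in $\delta$, and by construction $r'$ takes there the same values as $r$ on the cofinal positions $(\gamma+\omega^n n_2)+\xi'<(\gamma+\omega^n n_2)+\xi$; hence $\lim_\delta r'=\lim_{(\gamma+\omega^n n_2)+\xi}r$, and the required limit transition for $r'$ at $\delta$ is inherited verbatim from $r$. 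The limit transition at $\gamma+\omega^n n_1$ comes from $r''$; at all other limits $r'$ agrees locally with $r$ or with $r''$ and nothing has to be checked.
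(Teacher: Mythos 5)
Your proof is correct and follows essentially the same route as the paper: a pigeonhole argument over the $K+1$ transition-behaviour sets of all automata in $\Phi$ on the blank-parameter blocks $v\restriction{[\gamma,\gamma+\omega^n k)}\otimes\diamond$, followed by cutting out the block between the two matching indices and splicing runs (your extra verification of the length of $w$ and of the limit transitions just makes explicit what the paper leaves implicit in its chain of equivalences).
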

\begin{proof}
  Set $I=\Set{1, 2, \dots, n}$. We define the function
  \begin{equation*}
    f:\Set{0, 1, \dots, K} \to 2^{Q\times Q \times I}    
  \end{equation*}
  such that $f(j)$ contains $(q,p,i)$ if and only if there is a run of
  $\Aut{A}_i$ from state $q$ to state $p$ on $v\restriction{[\gamma,
    \gamma+\omega^n j)} \otimes \diamond^{\omega^n j}$. By choice of
  $K$ there are $n_1<n_2$ with $f(n_1) = f(n_2)$.  Thus,
  \begin{align*}
    &q \run{v\otimes \bar e}{\Aut{A}_i} p\\
    \xLeftrightarrow{\eqref{eq:suppEMpty}} &\exists{r,s} \left(q
      \run{(v\otimes \bar e)\restriction{[0, \gamma)}}{\Aut{A}_i} r
      \land r \run{v\restriction{[\gamma, \gamma+\omega^n n_2)}
        \otimes \diamond^{\omega^n n_2}}{\Aut{A}_i} s \land s
      \run{(v\otimes \bar e) \restriction{[\gamma+\omega^n n_2,
          \alpha)}}{\Aut{A}_i} p\right)\\
    \xLeftrightarrow[= f(n_2)]{f(n_1) } &\exists{r,s} \left(q
      \run{(v\otimes \bar e)\restriction{[0, \gamma)}}{\Aut{A}_i} r
      \land r \run{v\restriction{[\gamma, \gamma+\omega^n n_1)}
        \otimes \diamond^{\omega^n n_1}}{\Aut{A}_i} s \land s
      \run{(v\otimes \bar e) \restriction{[\gamma+\omega^n n_2,
          \alpha)}}{\Aut{A}_i} p\right)\\
    \xLeftrightarrow{\eqref{eq:suppEMpty}} &q \run{w\otimes \bar
      e}{\Aut{A}_i} p
  \end{align*}
  from which we immediately conclude that $v \sim^\Phi_E w$.  
\end{proof}

\begin{proof}
  [Proof of Proposition~\ref{prop:RelativeGrowthsOrdinalAutomatic}] The
  proof is by outer induction on $\lvert \supp(v) \setminus
  U_K(\supp(E), \alpha) \rvert$ and by inner (transfinite) induction
  on
  \begin{equation*}
    \beta = \min( \supp(v) \setminus U_K(\supp(E), \alpha)).
  \end{equation*}
  Fix the presentation
  \begin{equation*}
    \beta = \beta_{\sim K} + \omega^K b_K + \omega^{K-1} b_{K-1}
    +\dots+ b_0
  \end{equation*}
  with $b_0, \dots, b_K\in \N$
  and proceed as follows.
  \begin{itemize}
  \item If there is some $n\leq K$ such that $b_n+1-K \geq 0$ and
    \begin{equation*}
      \begin{aligned}
        &(\supp(E)\cup\{\alpha\}) \cap [\epsilon_1, \epsilon_2) =
        \emptyset \text{\ where }\\
        &\epsilon_1 = \beta_{\sim K} + \omega^K b_K + \omega^{K-1}
        b_{K-1} +\dots+
        \omega^n (b_n+1-K) \text{\ and}\\
        &\epsilon_2 = \beta_{\sim K} + \omega^K b_K + \omega^{K-1}
        b_{K-1} +\dots+ \omega^{n+1} (b_{n+1}+1),
      \end{aligned}
    \end{equation*}
    then we can apply the previous lemma and obtain a word $v'$ such
    that $v \sim^\Phi_E v'$ and 
    \begin{equation*}
      \lvert \supp(v') \setminus
      U_K(\supp(E), \alpha) \rvert  <   
      \lvert \supp(v) \setminus
      U_K(\supp(E), \alpha) \rvert
    \end{equation*} 
    or
    \begin{equation*}
      \beta' = \min( \supp(v') \setminus U_K(\supp(E), \alpha)).
    \end{equation*}
    has a presentation
    \begin{equation*}
      \beta' = \beta_{\sim K} + \omega^K b_K + \omega^{K-1} b_{K-1}
      +\dots+ \omega^{n+1} b_{n+1} + \omega^n b' + \omega^{n-1}
      b_{n-1} + \dots + b_0
    \end{equation*}
    with $b' < b_n$.
  \item Assume that the conditions for the previous case are not
    satisfied. Either $b_i \leq K-1 $ for $0\leq i \leq K$ or
    there is a minimal $i\leq K$ such that $b_i \geq K$. 
    We first show that the latter case cannot occur. 
    Assuming $b_i \geq K$ we have
    \begin{equation*}
      \begin{aligned}
        &(\supp(E)\cup\{\alpha\}) \cap
        [\epsilon_1 , \epsilon_2) \neq \emptyset \text{\ where}\\
        &\epsilon_1=\beta_{\sim K} + \omega^K b_K + \omega^{K-1}
        b_{K-1}
        +\dots+ \omega^i (b_i + 1 - K) \text{\ and}\\
        &\epsilon_2= \beta_{\sim K} + \omega^K b_K + \omega^{K-1}
        b_{K-1} +\dots+ \omega^{i+1} (b_{i+1}+1).
      \end{aligned}
    \end{equation*}
    Thus, there is some $\gamma \in \supp(E)\cup\{\alpha\}$ with
    \begin{equation*}
      \gamma = \beta_{\sim K} + \omega^K b_K + \omega^{K-1} b_{K-1}
      +\dots+ \omega^{i+1} b_{i+1} + \omega^i c_i + \omega^{i-1} c_{i-1}
      + \dots + c_0
    \end{equation*}
    such that $c_i + K-1 > b_i$ whence $\beta \in U_K(\gamma)\subseteq
    U_K(\supp(E), \alpha)$ contradicting the definition of $\beta$.

    Thus, we can assume that $b_i \leq K-1$ for all $0\leq i\leq K$. By
    definition of $\beta$ we conclude that $\beta_{\sim K} \neq
    \gamma_{\sim K} + \omega_1 c$ for $c\in\{0, 1, \dots, K\}$ for all
    $\gamma\in \supp(E)\cup \{0,\alpha\}$.  We proceed with one of the
    following cases depending on the cofinality of $\beta_{\sim K}$.
    \begin{enumerate}
    \item If $\beta_{\sim K}$ has countable cofinality, let
      \begin{align*}
        &\gamma = \max( ( \supp(E \cup\{v\} ) \cap \beta_{\sim K}
        )+1 \\
        &\delta = \min((\supp(E) \cap [\beta, \alpha)) \cup
        \{\alpha\})
        \text{\ and}\\
        &\delta' = \max( \supp(v) \cap \delta_{\sim K})+1.
      \end{align*}
      From the definition of $\beta$ it follows that $\gamma_{\sim K}
      < \beta_{\sim K} < \delta_{\sim K}$ and that $\sup(E) \cap
      [\gamma, \delta) = \emptyset$.  Note that $[\gamma, \beta_{\sim
        K})$ is of shape $\omega^{K+1} \eta_1$ for some ordinal
      $\eta_1\geq 1$ of countable cofinality.  By definition of
      $\delta'$, $[\delta', \delta_{\sim K})$ is of shape
      $\omega^{K+1} \eta_2$ for some ordinal $\eta_2\geq 1$.  Choose
      an ordinal $\eta$ such that $[\beta_{\sim K}, \delta') + \eta$
      is isomorphic to $[\gamma, \delta_{\sim K})$ and define
      \begin{equation*}
        w:= v\restriction{[0, \gamma)} + \diamond^{\omega^{K}} +
        v\restriction{[\beta_{\sim K}, \delta')} + \diamond^{\eta} +
        v\restriction{[\delta_{\sim K}, \alpha)}.
      \end{equation*}
      For all $\bar e \in E^k$ and $\Aut{A}\in A$ we conclude that
      \begin{align*}
        &q \run{v \otimes \bar e}{\Aut{A}} p \\ \Leftrightarrow&
        \exists{r,s,t,u \in Q} \left(
          \begin{alignedat}{3}
            &q \run{(v \otimes \bar
              e)\restriction{[0,\gamma)}}{\Aut{A}} r& &\land r
            \run{\diamond^{\omega^{K+1} \eta_1}}{\Aut{A}} s& &\land s
            \run{(v \otimes \bar e)\restriction{[\beta_{\sim K},
                \delta')}}{\Aut{A}} t \\
            && &\land t \run{\diamond^{\omega^{K+1}} \eta_2}{\Aut{A}}
            u & &\land u \run{(v \otimes \bar
              e)\restriction{[\delta_{\sim K}, \alpha)}}{\Aut{A}} p
          \end{alignedat}
        \right)
        \\
        \Leftrightarrow& \exists{r,s,t,u \in Q} \left(
          \begin{alignedat}{3}
            q \run{(v \otimes \bar
              e)\restriction{[0,\gamma)}}{\Aut{A}} r &\land r
            \run{\diamond^{\omega^{K}}}{\Aut{A}} s &&\land s
            \run{(v\otimes \bar e)\restriction{[\beta_{\sim
                  K},\delta')}}{\Aut{A}} t\\
            &\land t \run{\diamond^{\eta}}{\Aut{A}} u &&\land u
            \run{(v \otimes \bar e)\restriction{[\delta_{\sim K},
                \alpha)}}{\Aut{A}} p
          \end{alignedat}
        \right)  \\
        \Leftrightarrow & q \run{w\otimes \bar e}{\Aut{A}} p
      \end{align*}
      whence $w \sim^\Phi_E v$. Moreover
      \begin{equation*}
        \lvert \supp(w) \setminus U_K(\supp(E), \alpha) \rvert <
        \lvert \supp(v) \setminus U_K(\supp(E), \alpha) \rvert    
      \end{equation*}
      because the letter at position $\beta$ in $v$ has been shifted
      to position
      \begin{equation*}
        \gamma_{\sim K} + \omega^K (b_K+1) + \omega^{K-1}
        b_{k-1} + \omega^{K-2} b_{k-2} + \dots + b_0.
      \end{equation*}
      Since all $b_i \leq K-1$ we conclude that this  position belongs to
      $U_K(\supp(E), \alpha)$.
    \item If $\beta_{\sim K}$ has uncountable cofinality, Let
      \begin{align*}
        &\gamma = \max( \supp(E) \cap \beta_{\sim K}
        )+1 \\
        &\delta = \min((\supp(E) \cap [\beta, \alpha)) \cup
        \{\alpha\})
        \text{\ and}\\
        &\delta' = \max( \supp(v) \cap \delta_{\sim K})+1.
      \end{align*}
      From the definition of $\beta$ and the uncountable cofinality of
      $\beta_{\sim K}$ it follows that 
      $\gamma_{\sim K} +\omega_1 K < \beta_{\sim K} < \delta_{\sim K}$
      and that 
      $\sup(E) \cap [\gamma, \delta) = \emptyset$.  Let
      \begin{equation*}
        \gamma' = \max( \supp(E\cup\{v\}) \cap \beta_{\sim K})+1
      \end{equation*}
      and note that $\gamma' \leq \gamma_{\sim K} + \omega_1 (K+1)$
      because $\gamma'\in U_K(E, \alpha)$ since $\beta$ has
      been chosen minimal.  
      If $\gamma' \geq \gamma_{\sim K} + \omega_1 K$, then we do the
      following 
      preparatory step that locally changes $v$ to some $v'$ such that
      $ v \sim^\Phi_E v'$ and shrinking the corresponding value of
      $\gamma'$.
      For this purpose, note that 
      $e\restriction{[\gamma, \beta)} =\diamond^{[\gamma, \beta)}$ for
      all $e\in E$.  
      By choice of $K$ there are numbers $i < j \leq K$ such
      that for all $\Aut{A}\in\Phi$, all $q,p\in Q$ and all $\bar e
      \in E^k$ we have
      \begin{equation}\label{eq:omega1ijRunsEquality}
        q \run{(v\otimes \bar e)\restriction{[\gamma, \gamma +
            \omega_1 i)}}{\Aut{A}} p \Longleftrightarrow 
        q \run{(v\otimes \bar e)\restriction{[\gamma, \gamma +
            \omega_1 j)}}{\Aut{A}} p
      \end{equation}
      Choose an ordinal $\eta$ such that $\omega_1 \cdot (j-i) +
      [\gamma+\omega_1 j, \gamma') = [\gamma+\omega_1 j, \gamma') +
      \eta$ and set
      \begin{align*}
        v' = v\restriction{[0, \gamma_{\sim K} + \omega_1 i)} +
        v\restriction{[\gamma_{\sim K} + \omega_1 j, \gamma')} +
        \diamond^\eta + v\restriction{[\gamma', \alpha)}.
      \end{align*}
      Now $v \sim^\Phi_E v'$ because for all $\bar e \in E^k$
      \begin{align*}
        &q \run{v \otimes \bar e}{\Aut{A}} p \\
        \Leftrightarrow& \exists{r,s,t\in Q} \left(
          \begin{alignedat}{3}
            &q \run{(v \otimes \bar e)\restriction{[0,\gamma +
                \omega_1 j)}}{\Aut{A}} r 
            &&\land r \run{(v
              \otimes \bar e)\restriction{[\gamma + \omega_1 j,
                \gamma')}}{\Aut{A}} s \\
            \land\ &s \run{(v \otimes \bar e)\restriction{[\gamma',
                \beta_{\sim K})}}{\Aut{A}} t &&\land t \run{(v \otimes
              \bar e)\restriction{[\beta_{\sim K}, \alpha)}}{\Aut{A}}
            p
          \end{alignedat}
        \right)\\
        \xLeftrightarrow{Eq.~\eqref{eq:omega1ijRunsEquality}}& \exists{r,s,t\in Q} \left(
          \begin{alignedat}{3}
            &q \run{(v \otimes \bar e)\restriction{[0,\gamma +
                \omega_1 i)}}{\Aut{A}} r 
            &&\land r \run{(v \otimes \bar e)\restriction{[\gamma + \omega_1 j,
                \gamma')}}{\Aut{A}} s \\
            \land\ &s \run{\diamond^{[\gamma', \beta_{\sim
                  K})}}{\Aut{A}} t &&\land t \run{(v \otimes \bar
              e)\restriction{[\beta_{\sim K}, \alpha)}}{\Aut{A}} p
          \end{alignedat}
        \right)\\
        \xLeftrightarrow{Prop.~\ref{prop:PumpingLemmaUnc}}&
          \exists{r,s,t\in Q} \left( 
          \begin{alignedat}{3}
            &q \run{(v' \otimes \bar e)\restriction{[0,\gamma + \omega_1 i)}}{\Aut{A}} r 
            &&\land 
            r \run{(v \otimes \bar e)\restriction{[\gamma + \omega_1 j, \gamma')}}{\Aut{A}} s \\
            \land\ &s \run{\diamond^{\eta+[\gamma', \beta_{\sim
                  K})}}{\Aut{A}} t &&\land t \run{(v \otimes \bar
              e)\restriction{[\beta_{\sim K}, \alpha)}}{\Aut{A}} p
          \end{alignedat}
        \right)\\
        \Leftrightarrow & q \run{v'\otimes \bar e}{\Aut{A}} p
      \end{align*}
      Note that the definitions of $\beta$, $\gamma$, $\delta$ and
      $\delta'$ with respect to $v'$ agree with those for $v$. Thus,
      from now on we replace $v$ by $v'$ whence we can assume that
      $\gamma' < \gamma + \omega_1 K$.

      Note that $[\gamma', \beta_{\sim K})$ is
      of shape $\omega^{K+1} \eta_1$ for some ordinal $\eta_1\geq 1$
      of uncountable cofinality.  By definition of $\delta'$,
      $[\delta', \delta_{\sim K})$ is of shape $\omega^{K+1} \eta_2$
      for some ordinal $\eta_2\geq 1$.  Choose an ordinal $\eta$ such
      that $[\beta_{\sim K}, \delta') + \eta$ is isomorphic to
      $[\gamma' +\omega_1, \delta_{\sim K})$ and define
      \begin{equation*}
        w:= v\restriction{[0, \gamma')} + \diamond^{\omega_1} +
        v\restriction{[\beta_{\sim K}, \delta')} + \diamond^{\eta} +
        v\restriction{[\delta_{\sim K}, \alpha)}.
      \end{equation*}
      For all $\bar e \in E^k$ and $\Aut{A}\in A$ we conclude that
      \begin{align*}
        &q \run{v \otimes \bar e}{\Aut{A}} p \\
        \Leftrightarrow& \exists{r,s,t,u\in Q} \left(
          \begin{alignedat}{3}
            q \run{(v \otimes \bar
              e)\restriction{[0,\gamma')}}{\Aut{A}} r &\land r
            \run{\diamond^{\omega^{K+1} \eta_1}}{\Aut{A}} s &&\land s
            \run{(v \otimes \bar e)\restriction{[\beta_{\sim K},
                \delta')}}{\Aut{A}} t \\
            &\land t \run{\diamond^{\omega^{K+1}} \eta_2}{\Aut{A}} u
            &&\land u \run{(v \otimes \bar
              e)\restriction{[\delta_{\sim K}, \alpha)}}{\Aut{A}} p
          \end{alignedat}
        \right)\\
        \Leftrightarrow& \exists{r,s,t,u\in Q} \left(
          \begin{alignedat}{2}
            q \run{(v \otimes \bar
              e)\restriction{[0,\gamma')}}{\Aut{A}} r &\land r
            \run{\diamond^{\omega_1}}{\Aut{A}} s &&\land
            s \run{(v \otimes \bar e)\restriction{[\beta_{\sim K}, \delta')}}{\Aut{A}} t\\
            &\land t \run{\diamond^{\eta}}{\Aut{A}} u &&\land u
            \run{(v \otimes \bar e)\restriction{[\delta_{\sim K},
                \alpha)}}{\Aut{A}} p
          \end{alignedat}
        \right) \\
        \Leftrightarrow & q \run{w\otimes \bar e}{\Aut{A}} p
      \end{align*}
      whence $w \sim^\Phi_E v$. Moreover
      \begin{equation*}
        \lvert \supp(w) \setminus U_K(\supp(E), \alpha) \rvert <
        \lvert \supp(v) \setminus U_K(\supp(E), \alpha) \rvert    
      \end{equation*}
      because the letter at position $\beta$ in $v$ has been shifted
      to position
      \begin{equation*}
        \gamma'_{\sim K}+\omega_1  + \omega^K b_K + \omega^{K-1}
        b_{k-1} + \omega^{K-2} b_{k-2} + \dots + b_0
      \end{equation*}
      and since all $b_i<K-1$ we conclude that this is position is
      contained in $U_K(\supp(E), \alpha)$ because $\gamma_{\sim K} =
      \eta_{\sim K}+ \omega_1 c$ for some 
      $\eta\in \supp(E)\cup\{0\}$ and some $c\in\{0, 1, \dots, K-1\}$.
      \qedhere
    \end{enumerate}
  \end{itemize}
\end{proof}

The previous result allows us to directly deduce the following bound
on the growth rates of ordinal-automatic relations. 

\begin{theorem}\label{thm:GrowthRageOrdinalAutomatic}
  Fix an infinite family $\Fam{F}$ of sets of $(\alpha)$-words with
  $\emptyset\in\Fam{F}$. 
  For every $c > 1$, $\nu^\Phi(n) \leq n^c$ for infinitely
  many $n\in \N$. 
\end{theorem}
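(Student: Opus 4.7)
The plan is to derive the growth-rate bound from Proposition~\ref{prop:RelativeGrowthsOrdinalAutomatic} by constructing, for each $c>1$, an infinite sequence of parameter sets $E_N\in\Fam{F}$ whose supports are ``$U_K$-almost closed'', so that the reduction afforded by the proposition yields a bound of the form $n^c$.

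First, Proposition~\ref{prop:RelativeGrowthsOrdinalAutomatic} implies that, for any finite $E\subseteq\FinWords{\Sigma}{\alpha}$, every $(\alpha)$-word is $\sim^\Phi_E$-equivalent to some word supported in $U_K(\supp(E),\alpha)$. Since the set of $(\alpha)$-words with support contained in a finite set $Y$ has cardinality $|\Sigma|^{|Y|}$, any maximal $E$-$\Phi$-free set $G$ satisfies
\[
|G\restriction\Fam{F}|\ \leq\ |G|\ \leq\ |\Sigma|^{|U_K(\supp(E),\alpha)|}.
\]
To witness $\nu^\Phi(n)\leq n^c$ it therefore suffices to find $E\in\Fam{F}$ of size $n$ with $|E|=|\Sigma|^{|\supp(E)|}$ such that the ratio $|U_K(\supp(E),\alpha)|/|\supp(E)|$ does not exceed $c$.

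Next, given $c>1$, I would use as support a Cantor-normal-form grid; assuming $\alpha\geq\omega^{K+1}$ set
\[
X_N\ =\ \bigl\{\omega^K a_K+\omega^{K-1}a_{K-1}+\dots+a_0\ :\ 0\leq a_i<N\text{ for all }i\leq K\bigr\},
\]
and let $E_N$ be the set of all $(\alpha)$-words supported in $X_N$, so that $|\supp(E_N)|=|X_N|=N^{K+1}$ and $|E_N|=|\Sigma|^{N^{K+1}}$. (For smaller $\alpha$ an analogous grid at a lower Cantor-normal-form level works.) A direct case analysis of the three clauses defining $U_K$ shows that $U_K(X_N,\alpha)$ consists of $X_N$ together with a boundary layer where some coefficient in the Cantor normal form is bumped up by at most $K$, and that the contribution from shifted $\sim K$-parts is bounded by an $\alpha$-dependent constant. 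This gives $|U_K(X_N,\alpha)|=N^{K+1}+O(N^K)$, so the relevant ratio tends to $1$ as $N\to\infty$. Combining with the reduction above, for $N$ large enough that the ratio is at most $c$, we obtain $|G|\leq|E_N|^c=n_N^c$ where $n_N=|E_N|$; as $N$ varies this produces infinitely many witnessing $n$.

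The main obstacle is the sharp estimate $|U_K(X_N,\alpha)|=|X_N|+O(N^K)$: applying Lemma~\ref{lem:BoundsOnU} directly yields only $|U_K(X_N,\alpha)|\leq (K+1)\cdot N^{K+1}(1+o(1))$, that is, a constant overhead factor $K+1$ rather than one that tends to $1$. This would at best give $\nu^\Phi(n)\leq n^{K+1}$ for infinitely many $n$, not $n^c$ for every $c>1$. Achieving the tight ratio therefore requires the direct enumeration sketched above, treating separately the ordinals in $X_N$, the ``same $\sim K$-part'' boundary (which dominates the count), and the (bounded) contribution of ordinals with $\omega_1$-shifted $\sim K$-part. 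A secondary technical point is to justify $E_N\in\Fam{F}$: when $\Fam{F}$ is the family of all finite subsets this is automatic, and the general case follows by a small adaptation exploiting the infiniteness of $\Fam{F}$ together with $\emptyset\in\Fam{F}$.
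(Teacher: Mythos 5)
There is a genuine gap, and it sits exactly where you wave it off as ``a secondary technical point''. The quantity $\nu^\Phi_{\Fam{F}}(n)$ is an infimum over parameter sets $E\in\Fam{F}$ with $\lvert E\rvert=n$, and the theorem is asserted for an \emph{arbitrary} infinite family $\Fam{F}$ with $\emptyset\in\Fam{F}$ (this generality is what the paper later exploits, e.g.\ with $\Fam{F}$ the family of antichains for the random partial order, or the family $\{T_n\}$ in the final section). Your argument requires the parameter set $E_N$ to be the set of \emph{all} $(\alpha)$-words supported in the grid $X_N$, because the whole calculation rests on the two identities $\lvert E_N\rvert=\lvert\Sigma\rvert^{\lvert\supp(E_N)\rvert}$ and $\lvert U_K(\supp(E_N),\alpha)\rvert\leq c\,\lvert\supp(E_N)\rvert$. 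For a general $\Fam{F}$ no such set need belong to $\Fam{F}$: the members of $\Fam{F}$ can have supports of arbitrary shape and size, and for $E\in\Fam{F}$ with $\lvert E\rvert=n$ the support may be far larger than $\log_{\lvert\Sigma\rvert}n$, in which case your one-shot bound $\lvert\Sigma\rvert^{\lvert U_K(\supp(E),\alpha)\rvert}$ is vastly bigger than $n^c$. Neither the infiniteness of $\Fam{F}$ nor $\emptyset\in\Fam{F}$ lets you manufacture grid-shaped parameter sets inside $\Fam{F}$, so the promised ``small adaptation'' does not exist; what you have proved is only the special case where $\Fam{F}$ is (essentially) the family of all finite subsets of $\FinWords{\Sigma}{\alpha}$.

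The paper's proof is built precisely to avoid choosing parameter sets. It argues by contradiction: if $\nu^\Phi_{\Fam{F}}(n)>n^c$ for all $n>N$, then starting from any $F_0\in\Fam{F}$ with $\lvert F_0\rvert>N$, Proposition~\ref{prop:RelativeGrowthsOrdinalAutomatic} yields a maximal $F_{i-1}$-$\Phi$-free set $G_i$ whose elements are all supported in $U_K^i(\supp(F_0),\alpha)$, and the assumed failure of the bound hands you the next parameter set $F_i\in\Fam{F}$ with $F_i\subseteq G_i$ and $\lvert F_i\rvert\geq\lvert F_{i-1}\rvert^c$, hence $\lvert F_i\rvert\geq\lvert F_0\rvert^{c^i}$. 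Lemma~\ref{lem:BoundsOnU} bounds $\lvert U_K^i(\supp(F_0),\alpha)\rvert$ polynomially in $i$, so only $(\lvert\Sigma\rvert+1)^{\mathrm{poly}(i)}$ words are available to host $F_i$, contradicting the doubly exponential growth. Note that the iteration also makes your sharp estimate $\lvert U_K(X_N,\alpha)\rvert=\lvert X_N\rvert+O(N^K)$ unnecessary: the crude constant-factor bound of Lemma~\ref{lem:BoundsOnU} suffices, since the loss per step is polynomial while the hypothetical gain per step is an exponent $c>1$. (In the special case $\Fam{F}=$ all finite subsets, your direct construction, with the grid count carried out, would indeed give an explicit witness sequence and a proof without contradiction; but it does not establish the theorem as stated, and in particular not in the form the applications need.)
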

\begin{proof}
  Heading for a contradiction, assume that there is a natural number
  $N$ such that
  \begin{equation*}
    \forall{E\in\Fam{F} \text{\ with }\lvert E \rvert >N} \forall{G
      \text{\ maximal $E$-$\Phi$-free}} \exists{F\in\Fam{F}}
      (F\subseteq G \text{ and } \lvert F \rvert \geq \lvert E \rvert^c)
  \end{equation*}
  where $c > 1$.
  
  Take a finite set $F_0$ of parameters with $\lvert F_0 \rvert > N$.  Having
  defined a finite set $F_{i-1}$ such that
  $\supp(F_{i-1})\in U_K^{i-1}(\supp(F_0))$ we can use the previous
  lemma to choose some maximal $F_{i-1}$-$\Phi$-free set $G_i$ with
  $\supp(G_i) \in U_K^{i}(\supp(F_0))$.  By assumption, there is some
  $F_i\in\Fam{F}$ with $F_i\subseteq G_i$ and  
  $\lvert F_i \rvert \geq \lvert F_{i-1} \rvert^c$.
  By induction we obtain $\lvert F_{i} \rvert \geq \lvert F_0
  \rvert^{c^i}$.

  On the other hand,  all elements of $F_i$ have support in
  $U_K^{i}(\supp(F_0))$ which by  Lemma~\ref{lem:BoundsOnU} is at most
  of size 
  \begin{equation*}
    (\lvert \Sigma \rvert +1)^{c_0 (iK+1) ( c_1 + K i)^K}
  \end{equation*}
  for some constants $c_0$ and $c_1$.  Since $c^i$ grows faster than
  any polynomial in $i$ we have 
  $c^i > c_0 (i+1) ( c_1 + K i)^K$ 
  for some large $i$
  which leads to a contradiction.  
\end{proof}

\section{Applications}
\label{sec:Applications}

\subsection{Random Graph}
\label{sec:RandomGraph}

The \emph{random graph (or Rado graph)} $(V, E)$ is the unique countable graph that has the
property that for any choice of finite subsets $V_0, V_1\subseteq V$
there is a node $v'$ which is adjacent to every element of $V_0$ but
not adjacent to any element of $V_1$.

\begin{theorem}
  Given an ordinal $\alpha$, 
  the random graph is  not
  \automatic{\alpha}.
\end{theorem}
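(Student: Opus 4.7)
My plan is to contradict Theorem~\ref{thm:GrowthRageOrdinalAutomatic} by showing that any \automatic{\alpha} presentation of the random graph would force an exponential growth rate. I would assume, for contradiction, that the random graph $(V, E)$ is \automatic{\alpha} for some ordinal $\alpha$, fix an automaton $\Aut{A}_E$ recognising the edge relation, set $\Phi = \{\Aut{A}_E\}$, and let $\Fam{F}$ be the family of all finite subsets of $V$.

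The heart of the argument is a simple count based on the extension property. For any finite parameter set $E_0 \subseteq V$ of size $n$, the equivalence $v \sim^\Phi_{E_0} v'$ holds precisely when $v$ and $v'$ have the same set of neighbours inside $E_0$, so $\sim^\Phi_{E_0}$ has at most $2^n$ classes on $V$. By the defining property of the Rado graph, for every $S \subseteq E_0$ there is some $v_S \in V$ adjacent to every element of $S$ and to no element of $E_0 \setminus S$; hence all $2^n$ classes are non-empty. Consequently every maximal $E_0$-$\Phi$-free subset $G \subseteq V$ consists of exactly one representative per class and has cardinality $2^n$. Since $G$ is finite and all finite subsets of $V$ lie in $\Fam{F}$, I obtain $\lvert G\restriction{\Fam{F}} \rvert = \lvert G \rvert = 2^n$, so $\nu^\Phi_\Fam{F}(E_0) = 2^n$ and therefore $\nu^\Phi_\Fam{F}(n) = 2^n$ for every $n \in \N$.

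Plugging $c = 2$ into Theorem~\ref{thm:GrowthRageOrdinalAutomatic} would yield infinitely many $n$ with $\nu^\Phi_\Fam{F}(n) \leq n^2$; combined with $\nu^\Phi_\Fam{F}(n) = 2^n$ this gives $2^n \leq n^2$ for infinitely many $n$, an obvious absurdity. This contradiction completes the proof. The only subtlety worth verifying is that the growth-rate framework is being applied to the \automatic{\alpha} structure itself rather than to the representing language over $L(\Aut{A})$, so both the equivalence $\sim^\Phi_{E_0}$ and the extension property are evaluated in $V$; once that is noted, no genuine obstacle remains, and the argument works uniformly for every ordinal $\alpha$.
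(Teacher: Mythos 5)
Your argument is essentially the paper's own proof: the paper also sets $\Phi$ to a single automaton for the edge relation, takes $\Fam{F}$ to be all subsets of the domain, notes (citing \Delhomme) that the extension property forces $\nu_{\Fam{F}}^\Phi(n) = 2^n$ for all $n$, and contradicts the polynomial growth-rate bound of Section~\ref{sec:GrowthRate}. You merely spell out the $2^n$ count explicitly and invoke Theorem~\ref{thm:GrowthRageOrdinalAutomatic} directly (with $c=2$) rather than Proposition~\ref{prop:RelativeGrowthsOrdinalAutomatic}, which is the same contradiction.
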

\begin{proof}
  Heading for a contradiction assume that the random graph was
  \automatic{\alpha}.  As shown by \Delhomme\cite{Delhomme04}, the
  random graph satisfies $\nu_{\Fam{F}}^\Phi(n) = 2^n$ for all
  $n\in\N$ where $\Phi$ consists of only one automaton recognising the
  edge relation of the random graph and $\Fam{F}$ contains all subsets
  of the domain of the random graph.  This contradicts
  Proposition~\ref{prop:RelativeGrowthsOrdinalAutomatic}.
\end{proof}
\begin{remark}
  Similarly, taking $\Fam{F}$ to be the family of all antichains, one
  proves that the random partial order is not \automatic{\alpha}
  (cf.~\cite{KhoussainovNRS07} for an analogous result for automatic structures). 
\end{remark}

\subsection{Integral Domains}
\label{sec:IntegralDom}

In this part we show that there is no infinite \automatic{\alpha}
integral domain for any $\alpha < \OmegaPlus$. 
We cannot use the growth rate theorem directly but use a variant of
its proof. 
The difference is that we do not use a fixed set of relations $\Phi$
when defining the sequence $(F_i)_{i\in \N}$ but in each step we take a
different relation but ensure that we can still apply 
Proposition~\ref{prop:RelativeGrowthsOrdinalAutomatic} with a fixed
constant $K$ in each step. This is ensured by using  relations defined by a
fixed automaton $\Aut{A}$ which has an additional parameter which is
chosen very carefully. 
In fact, we follow the proof of
Khoussainov et al. \cite{KhoussainovNRS07} from the automatic
case. Let us recall the basic definitions and some observations 
from their proof. An \emph{integral domain} is a commutative ring with
identity $(D,+, \cdot, 0, 1)$ such that 
$d \cdot e = 0 \Rightarrow d=0$ or $e=0$  for all $d,e\in D$.

\begin{lemma}[cf.~Proof of Theorem~3.10 from \cite{KhoussainovNRS07}]
  Let $(D, +, \cdot, 0, 1)$ be an integral domain and $E\subseteq D$ a
  finite subset. There is some $d\in D$ such that for all $a_1, a_2,
  b_1, b_2 \in E$,  if $a_1 d + b_1 = a_2 d + b_2$, then
  $a_1 = a_2$ and $b_1 = b_2$, i.e., the function $f_d:E^2 \to D$,
  $(e_1,e_2)\mapsto e_1 d+ e_2$ is injective. 
\end{lemma}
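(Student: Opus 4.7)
The plan is a straightforward counting argument exploiting the cancellation law of integral domains. I would begin by rewriting the collision equation $a_1 d + b_1 = a_2 d + b_2$ as $(a_1 - a_2)\, d = b_2 - b_1$. If $a_1 = a_2$, the left-hand side vanishes and forces $b_1 = b_2$, so such quadruples yield only trivial collisions. Consequently, $f_d$ fails to be injective precisely when there exist $a_1, a_2, b_1, b_2 \in E$ with $a_1 \neq a_2$ and $(a_1 - a_2)\, d = b_2 - b_1$; call any $d$ with this property \emph{bad}.

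The next step is to bound the number of bad elements. For any fixed $a_1 \neq a_2$ in $E$, the multiplication map $x \mapsto (a_1 - a_2)\, x$ is injective on $D$: if $(a_1 - a_2)\, x = (a_1 - a_2)\, y$ then $(a_1 - a_2)(x - y) = 0$, and since $a_1 - a_2 \neq 0$, the defining property of an integral domain yields $x = y$. Hence for each choice of $(a_1, a_2, b_1, b_2) \in E^4$ with $a_1 \neq a_2$ there is at most one $d \in D$ solving $(a_1 - a_2)\, d = b_2 - b_1$. The set of bad $d$ therefore has size at most $\lvert E \rvert^4$.

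Finally, since the surrounding application concerns infinite integral domains—indeed, the point of this section is to rule out \automatic{\alpha} presentations of infinite integral domains—we may assume $D$ is infinite, so there exists some $d \in D$ that is not bad, and this $d$ witnesses the claim. There is no real obstacle in the argument; the only subtlety is observing that the statement of the lemma implicitly presupposes that $D$ is large enough to exceed $\lvert E \rvert^4$, which is automatic once $D$ is infinite.
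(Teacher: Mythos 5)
Your argument is correct and is essentially the same as the proof the paper points to (it cites Theorem~3.10 of Khoussainov et al.\ rather than reproving it): cancellation in an integral domain gives at most one bad $d$ per quadruple $(a_1,a_2,b_1,b_2)$ with $a_1\neq a_2$, hence at most $\lvert E\rvert^4$ bad elements, and an element outside this finite set witnesses the claim. Your remark that one must assume $D$ infinite (or at least larger than the bad set) is apt, since that hypothesis is implicit in the cited source and in the application to infinite integral domains.
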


\begin{proposition}
  Let $\Struc{A} = (D, +, \cdot, 0, 1)$ be an \automatic{\alpha}
  integral domain 
  for some \mbox{$\alpha < \OmegaPlus$}.\footnote{Without loss of
    generality, we can assume that $\Struc{A}$ has a injective representation by
    the automata $(\Aut{A}_D, \Aut{A}_+, \Aut{A}_\cdot, \Aut{A}_0,
    \Aut{A}_1)$  such that $L(\Aut{A}_D) = D$, i.e., $D$ is a set of
    $(\alpha)$-words (cf.~\cite{HuKaSchl14}).}
  
  There is a constant $m$ such that for every
  finite set $X\subseteq \alpha$ of ordinals 
  we have 
  \begin{equation*}
    \lvert\Set{d\in D\mid \supp(d)\subseteq U_m(X, \alpha)}\rvert \geq
    \lvert\Set{d\in D\mid \supp(d)\subseteq X}\rvert^2. 
  \end{equation*}
\end{proposition}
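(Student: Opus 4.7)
The plan is to exhibit $|E|^2$ distinct elements of $D$ with support in $U_m(X,\alpha)$ for a uniform constant $m$, where $E=\{d\in D:\supp(d)\subseteq X\}$. If $|E|\leq 1$ the claim is trivial because $X\subseteq U_m(X,\alpha)$, so I would assume $|E|\geq 2$ and invoke the preceding lemma to choose $c\in D$ such that $f_c(e_1,e_2)=e_1 c+e_2$ is injective on $E^2$. The obstacle is that $\supp(c)$ is a priori uncontrolled, so $c$ itself must be replaced by a ``small'' surrogate before being used to generate the $|E|^2$ witnesses.

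To carry this out, I would take $\Phi$ to consist of three \automatic{\alpha} relations (all available by closure under first-order definability, see \cite{HuKaSchl14}): the unary membership relation of $D$, the $(1{+}4)$-ary relation
\[ R_{\mathrm{inj}}(x;y_1,y_2,y_3,y_4) \iff xy_1+y_2=xy_3+y_4, \]
and the $(1{+}3)$-ary relation
\[ R_{\mathrm{val}}(x;y_1,y_2,y_3) \iff x=y_1 y_3+y_2. \]
Let $K$ be the constant produced by Proposition~\ref{prop:RelativeGrowthsOrdinalAutomatic} for this $\Phi$.

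First, I would apply Proposition~\ref{prop:RelativeGrowthsOrdinalAutomatic} to $c$ with parameter set $E$, obtaining $c'$ with $\supp(c')\subseteq U_K(\supp(E),\alpha)=U_K(X,\alpha)$ and $c\sim^\Phi_E c'$. The unary $D$-automaton in $\Phi$ (a $p{=}0$ case) forces $c'\in D$, and applying $\sim^\Phi_E$ to $R_{\mathrm{inj}}$ on every quadruple from $E^4$ transfers the injectivity of $f_c$ to $f_{c'}$. Consequently $V:=\{e_1 c'+e_2 : e_1,e_2\in E\}\subseteq D$ has exactly $|E|^2$ elements. Second, for each $v=e_1 c'+e_2\in V$, I would apply the proposition once more to $v$ with parameter set $E\cup\{c'\}$, obtaining $w$ with $\supp(w)\subseteq U_K(X\cup\supp(c'),\alpha)\subseteq U_K(U_K(X,\alpha),\alpha)$ and $v\sim^\Phi_{E\cup\{c'\}} w$. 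Since $R_{\mathrm{val}}(v;e_1,e_2,c')$ holds, the equivalence forces $R_{\mathrm{val}}(w;e_1,e_2,c')$, i.e.\ $w=e_1 c'+e_2=v$, so $\supp(v)\subseteq U_K(U_K(X,\alpha),\alpha)$.

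Finally, a direct inspection of the definition of $U_m$ shows that $U_K(U_K(X,\alpha),\alpha)\subseteq U_{2K}(X,\alpha)$, since a coefficient can grow by at most $K$ and a shift of $\beta_{\sim K}$ of the form $\omega_1\cdot c$ with $c\leq K$ can occur at each application, and both operations compose additively. Taking $m=2K$ completes the argument. The main obstacle I anticipate is this simultaneous juggling through a single equivalence: $\sim^\Phi$ must force $c'$ into $D$ while preserving injectivity of $f_{c'}$ on $E^2$, and independently force each shrunken representative of $v$ to equal $v$. Inclusion of the auxiliary $R_{\mathrm{inj}}$ (as opposed to just the graphs of $+$ and $\cdot$) is the key technical device, since without it the injectivity needed to count $|E|^2$ distinct elements would not descend from $c$ to $c'$.
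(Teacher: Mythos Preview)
Your proof is correct and takes a genuinely different route from the paper's. The paper avoids the two-stage shrinking by exploiting the $(\alpha)$-automatic well-order $\sqsubseteq$ of Lemma~\ref{lem:WellOrder}: rather than fixing some injective $c$ and then replacing it by a small $c'$, it defines, for each parameter word $p$, the $\sqsubseteq$-\emph{least} $x$ for which $(a,b)\mapsto ax+b$ is injective on $\{d : d \subseteq_\supp p\}$. The relation $\phi(p,a,b,c) \equiv (a,b \subseteq_\supp p) \land (c = ax + b)$ with $x$ this least element is then first-order definable and \emph{functional} in $c$; a single application of Proposition~\ref{prop:RelativeGrowthsOrdinalAutomatic} to the automaton for $\phi$ with parameters $p_X, a, b$ (all of support contained in $X$) forces $c_{a,b} = ax + b$ itself to have support in $U_K(X, \alpha)$, giving $m = K$ directly.

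Your approach trades the well-order trick for an explicit two-pass argument: first shrink $c$ to $c'$ using the auxiliary relation $R_{\mathrm{inj}}$ (which, as you rightly stress, is the key device ensuring injectivity descends), then exploit functionality of $R_{\mathrm{val}}$ to see that shrinking each $e_1 c' + e_2$ leaves it unchanged. This costs a second application of Proposition~\ref{prop:RelativeGrowthsOrdinalAutomatic} and hence the looser constant $m = 2K$, but is more self-contained in that it does not rely on the automatic well-order. Both arguments need first-order closure and hence $\alpha < \OmegaPlus$. Two minor points: your $\Phi$ mixes arities, so strictly speaking you should pad the lower-arity relations with dummy coordinates to fit the framework; and your equality $U_K(\supp(E),\alpha) = U_K(X,\alpha)$ should be an inclusion, since $\supp(E)$ may be a proper subset of $X$ (this is harmless by monotonicity of $U_K$).
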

\begin{proof}
  As an abbreviation, we use the expression $x_1, \dots, x_n
  \subseteq_\supp y$ for 
  \begin{equation*}
    x_1 \subseteq_\supp y \land \dots \land
    x_n\subseteq_\supp y.    
  \end{equation*}
  Let $\psi(x, p)$ denote the formula
  \begin{equation*}
    x\in D \land
    \forall{a_1,a_2,b_1,b_2\subseteq_\supp p}
    \left( ( a_1 x+b_1 = a_2 x+b_2) \rightarrow
    (a_1 = a_2 \land b_1 = b_2) \right)
  \end{equation*}
  and $\psi_{\min}(x,p)$ the formula
  \begin{equation*}
    \psi(x,p) \land \forall y (\psi(y,p) \rightarrow x \sqsubseteq y)
  \end{equation*}
  where $\sqsubseteq$ denotes the \automatic{\alpha} well-order from
  Lemma ~\ref{lem:WellOrder}.
  Due to the previous lemma for every
  $p\in\FinWords{\Sigma}{\alpha}$ there is a unique $x$ satisfying
  $\psi_{\min}(x,p)$. Moreover, the map \mbox{$f:(a,b)\mapsto ax+b$} is
  injective when the domain is restricted to words with support
  contained in $\supp(p)$. 

  Since $\sqsubseteq$ is \automatic{\alpha} 
  and \automatic{\alpha}
  structures are close under first-order definitions, 
  there is an automaton $\Aut{A}_\phi$ corresponding to the following
  formula
  \begin{equation*}
    \phi(p,a,b,c) = a\subseteq_\supp p \land b\subseteq_\supp p
    \land \exists x \left(\psi_{\min}(x,P) \land  c= ax+b\right).    
  \end{equation*}
  For each finite set $X$ of ordinals, choose an $(\alpha)$-word $p_X$
  such that $\supp(p) = X$. Set
  \begin{align*}
    D_X &= \Set{d\in D | c\subseteq_\supp p}
    \text{\ and}\\
    F_X &= \Set{c \in D |
      \exists{a,b\in\FinWords{\Sigma}{\alpha}}
      \Aut{A}_\phi \text{\ accepts } (p_X, a, b, c)}.
  \end{align*}
  Since we are dealing with an injective presentation, 
  Proposition~\ref{prop:RelativeGrowthsOrdinalAutomatic} 
  implies that for every $a,b\in D_X$ there is some $c_{a,b}\in F_X$ such
  that $\Aut{A}_\phi$ accepts $(p_X, a, b, c_{a,b})$ and
  $\supp(c_{a,b}) \subseteq U_K(\supp(p_X)\cup \supp(a) \cup \supp(c),
  \alpha) = U_K(X, \alpha)$. Moreover, $c_{a,b}=c_{a',b'}$ implies
  $a=a'$ and $b=b'$ whence we conclude that 
  \begin{equation*}
    \lvert \Set{d\in D | \supp(d) \subseteq U_K(X, \alpha)}\rvert \geq
    \lvert F_X \rvert \geq \lvert D_X\rvert^2.    
  \end{equation*}
\end{proof}
\begin{remark}
  We crucially rely on $\alpha < \OmegaPlus$ because otherwise we
  cannot be sure that there is an automaton corresponding to
  $\psi(x,p)$. 
\end{remark}

\begin{corollary}
  Let $\alpha < \OmegaPlus$. There is no \automatic{\alpha} infinite
  integral domain. In particular, there is no \automatic{\alpha}
  infinite field. 
\end{corollary}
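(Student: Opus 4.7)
The plan is to iterate the preceding proposition to force a doubly exponential lower bound on the number of elements of $D$ whose support is contained in the sets $U_m^i(X_0,\alpha)$, and then to contradict it using the polynomial upper bound on $|U_m^i(X_0,\alpha)|$ from Lemma~\ref{lem:BoundsOnU}.

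First I would fix the constant $m$ supplied by the preceding proposition and abbreviate $D_Y := \Set{d\in D | \supp(d)\subseteq Y}$. Because $D$ is infinite and each element of $D$ has finite support, I can choose two distinct elements $d_1, d_2 \in D$ and set $X_0 := \supp(d_1) \cup \supp(d_2)$, so that $X_0$ is a finite set of ordinals with $|D_{X_0}| \geq 2$. Then I would iteratively define $X_{i+1} := U_m(X_i, \alpha)$, so $X_i = U_m^i(X_0, \alpha)$, and apply the preceding proposition at each step to obtain $|D_{X_{i+1}}| \geq |D_{X_i}|^2$. A straightforward induction yields the doubly exponential lower bound $|D_{X_i}| \geq 2^{2^i}$.

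For the contradictory upper bound, I would observe that every $d \in D_{X_i}$ is uniquely determined by its values at the positions in $X_i$, since all other positions are $\diamond$; consequently $|D_{X_i}| \leq |\Sigma|^{|X_i|}$. Now Lemma~\ref{lem:BoundsOnU} bounds $|X_i|$ by a polynomial in $i$ (of degree $m+2$, with coefficients depending only on $m$ and on the set $X_0 \cup \{\alpha\}$), so $|D_{X_i}|$ is at most singly exponential in $i$. Since $2^{2^i}$ eventually outgrows any function of the form $|\Sigma|^{p(i)}$ with $p$ a fixed polynomial, comparing the two bounds for sufficiently large $i$ yields the desired contradiction. The ``in particular'' clause about fields is then immediate because every infinite field is an infinite integral domain.

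The single place that needs care is the base case: the squaring iteration only escapes to infinity if $|D_{X_0}| \geq 2$, which is precisely where the infiniteness of $D$ enters. Apart from that, everything is formal iteration combined with the quantitative bound of Lemma~\ref{lem:BoundsOnU}, and the hypothesis $\alpha < \OmegaPlus$ is used only indirectly, through the preceding proposition, where it was needed to guarantee the existence of an automaton for $\psi_{\min}$.
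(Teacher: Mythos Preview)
Your proposal is correct and follows essentially the same approach as the paper: pick two distinct elements to get a starting set $X_0$ with $\lvert D_{X_0}\rvert\geq 2$, iterate the preceding proposition to obtain the doubly exponential lower bound $\lvert D_{U_m^i(X_0,\alpha)}\rvert\geq 2^{2^i}$, and contrast this with the singly exponential upper bound coming from Lemma~\ref{lem:BoundsOnU}. Your explicit remarks on why $\lvert D_{X_0}\rvert\geq 2$ needs the infiniteness hypothesis and on where $\alpha<\OmegaPlus$ enters are welcome clarifications that the paper leaves implicit.
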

\begin{proof}
  Assume $D$ is the domain of an \automatic{\alpha} infinite integral
  domain. Choose two elements $d_1\neq d_2$ from $D$ and let 
  $X = \supp(d_1)\cup \supp(d_2)$.  Set 
  \begin{equation*}
    F_0 = \Set{d\in D | \supp(d)\subseteq \supp(d_1)\cup\supp(d_2)}.    
  \end{equation*}
  Iterated
  application of the previous lemma yields that
  \begin{equation*}
    F_i:= \Set{d\in D | \supp(d)\subseteq U^i_K(\supp(F_0),\alpha)}    
  \end{equation*}
  satisfies
  $\lvert F_{i+1} \rvert \geq \lvert F_i \rvert^2$. 
  Since $\lvert F_0 \rvert \geq 2$ we conclude that
  $\lvert F_n\rvert \geq 2^{2^n}$ and 
  $\supp(F_n)\subseteq U^n_K(X, \alpha)$.
  From Lemma~\ref{lem:BoundsOnU} we conclude that there are only
  $2^{p(n)}$ many elements in $D$ with support in $U^n_K(X, \alpha)$
  for some polynomial $p(n)$ which results in a contradiction for
  large $n$. 
\end{proof}

\section{Optimality of the Bound on the Growth-Rate}

Recall that word-automatic structures satisfy that
$\nu_{\Fam{F}}^{\Phi}(n) < n \cdot k$ for some constant $k$ where
$\Fam{F}$ is a family as before and $\Phi$ is a finite set of
word-automatic relations. In contrast, our bound for
\automatic{\alpha} structures is only $n^k$ for every constant $k > 1$. 
In this section, we give an example that, if $\alpha\geq \omega^2$,
then there are \automatic{\alpha}
structures violating any bound of the form $n \cdot k$ for every
constant $k$.\footnote{It is easily shown that for $\alpha<\omega^2$ every
  \automatic{\alpha} structure is word-automatic and vice versa whence
  the stronger bound from the word-automatic case applies.}
\begin{definition}
  For every $n\in\N$, let 
  \begin{equation*}
    D_n = \Set{\omega n_1 + n_2 | n_1 + n_2 \leq n}
  \end{equation*}
  and let $T_n$ be the set $(\omega^2)$-words over
  $\Sigma=\Set{a,b,\diamond}$ such that 
  $w\in T_n$ if and only if $\supp(w) = D_n$. 
  For 
  $i\in\{a,b\}$, we also define functions 
  \begin{align*}
    f_i:\FinWords{\Sigma}{\omega^2}  \times
  \FinWords{\Sigma}{\omega^2} \to \FinWords{\Sigma}{\omega^2} \text{ by}\\
  f_i(w,v)(\alpha) =
  \begin{cases}
    i &\text{if }\alpha=0,\\
    w(\omega n_1 + n_2) &\text{if }\alpha = \omega n_1 + n_2+1 \\
    v(\omega n_1) &\text{if } \alpha = \omega (n_1+1). 
  \end{cases}
  \end{align*}
\end{definition}

It is not difficult to see that the graphs of $f_a$ and $f_b$ are
\automatic{\omega^2} relations. Let $\Phi$ consist of two automata, one
corresponding to the graph of $f_a$ and one corresponding to the graph
of $f_b$. 
It is straightforward to verify that 
$T_{n+1} = f_a(T_n\times T_n) \cup f_b(T_n \times T_n)$. 
Moreover, since $f_a$ and $f_b$ are functions, it follows that
any
maximal $T_n$-$\Phi$-free set is of the form $T_{n+1}\cup \{w\}$ for
some $(\omega^2)$-word $w\notin T_{n+1}$.
A simple calculation shows that $\lvert D_n \rvert =
\frac{(n+1)(n+2)}{2}$ whence 
$\lvert T_n \rvert = 2^{\frac{(n+1)(n+2)}{2}}$. 

\begin{corollary}
  Setting $\Fam{F}$ to be the family of the sets $T_n$ for every
  $n\in\N$, we obtain that
  $\nu_{\Fam{F}}^\Phi(m) =
  \begin{cases}
    m \cdot 2^{n+2} &\text{if } m = 2^{\frac{(n+1)(n+2)}{2}}, \\
    \infty &\text{otherwise.}
  \end{cases}
  $
\end{corollary}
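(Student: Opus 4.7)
The plan is to read the claim directly from the two structural facts stated just before the corollary, namely that $T_{n+1}=f_a(T_n\times T_n)\cup f_b(T_n\times T_n)$ and that every maximal $T_n$-$\Phi$-free set has the form $T_{n+1}\cup\{w\}$ for some $w\notin T_{n+1}$. I would also use the observation that $k\mapsto\lvert T_k\rvert=2^{(k+1)(k+2)/2}$ is strictly increasing, so $T_n$ is the unique member of $\Fam{F}$ of its cardinality.

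First I would dispatch the ``otherwise'' branch: if $m$ is not of the form $2^{(n+1)(n+2)/2}$, then $\{E\in\Fam{F}:\lvert E\rvert=m\}$ is empty, and $\nu^\Phi_{\Fam{F}}(m)=\inf\emptyset=\infty$ by the convention in the definition of $\nu$. For the remaining case, $m=\lvert T_n\rvert$, and the infimum reduces to the single value $\nu^\Phi_{\Fam{F}}(T_n)$.

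Next, pick any maximal $T_n$-$\Phi$-free set $G=T_{n+1}\cup\{w\}$. Since $T_{n+1}\subseteq G$ and $T_{n+1}\in\Fam{F}$, I immediately get $\lvert G\restriction\Fam{F}\rvert\geq\lvert T_{n+1}\rvert$. For the matching upper bound I would use that every element of $T_k$ has support exactly $D_k$ and that the sets $D_k$ are pairwise distinct: if $T_k\subseteq G$, then because $\lvert T_k\rvert\geq 2$ while $G\setminus T_{n+1}$ contains at most the single element $w$, at least one element of $T_k$ must lie in $T_{n+1}$; but this forces $D_k=D_{n+1}$ and hence $k=n+1$. Thus $\lvert G\restriction\Fam{F}\rvert=\lvert T_{n+1}\rvert$ for every maximal $G$, so $\nu^\Phi_{\Fam{F}}(T_n)=\lvert T_{n+1}\rvert$.

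Finally, a short arithmetic calculation gives $\lvert T_{n+1}\rvert/\lvert T_n\rvert=2^{(n+2)(n+3)/2-(n+1)(n+2)/2}=2^{n+2}$, so $\nu^\Phi_{\Fam{F}}(m)=m\cdot 2^{n+2}$. The entire argument is bookkeeping once the description of maximal $T_n$-$\Phi$-free sets is in hand; the only mild delicacy is ruling out that some $T_k$ with $k\neq n+1$ could sit inside $G$, which is precisely what the support argument handles.
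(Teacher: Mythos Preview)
Your proposal is correct and follows the same approach as the paper, which simply records the arithmetic identity $\lvert T_{n+1}\rvert = \lvert T_n\rvert\cdot 2^{n+2}$ and leaves everything else implicit. You have just filled in the details the paper omits: the ``otherwise'' case via $\inf\emptyset=\infty$, the uniqueness of $T_n$ among members of $\Fam{F}$ of a given size, and the support argument ruling out $T_k\subseteq G$ for $k\neq n+1$.
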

\begin{proof}
  Just note that $\lvert T_{n+1} \rvert = 2^{\frac{(n+3)(n+2)}{2}} =
    2^{\frac{(n+1)(n+2)}{2}+(n+2)} = \lvert T_n \rvert \cdot
    2^{n+2}$. 
\end{proof}
Since there for every constant $k$ there is some value $n_0\in\N$ such
that $2^{n+2}\geq k$ this shows that 
$\nu_{\Fam{F}}^\Phi(m) \leq  m \cdot k$ only holds for finitely many
$m\in\N$. This shows that there is no word-automatic presentation of
the \automatic{\omega^2}  structure 
$(\FinWords{\Sigma}{\omega^2}, f_a, f_b)$ and that the bound in
Theorem~\ref{thm:GrowthRageOrdinalAutomatic} cannot be replaced by
$n\cdot k$. 

\bibliography{bib}
\bibliographystyle{jalc}

\end{document}